\documentclass{llncs}
\pdfoutput=1

\usepackage{graphicx}
\usepackage{float}
\usepackage{rotating}

\DeclareGraphicsExtensions{.pdf,.png,.jpg}

\begin{document}

\title{Reversible Nets of Polyhedra}

\author{Jin Akiyama\inst{1} \and Stefan
  Langerman\inst{2}\thanks{Directeur de Recherches du F.R.S.-FNRS} \and Kiyoko Matsunaga\inst{1}}

\institute{Tokyo University of Science,\\
1-3 Kagurazaka, Shinjuku, Tokyo 162-8601, Japan\\
\email{ja@jin-akiyama.com}, \email{matsunaga@mathlab-jp.com}
\and
Universit\'e Libre de Bruxelles\\
Brussels, Belgium\\
\email{stefan.langerman@ulb.ac.be}
}

\maketitle

\begin{abstract}

An example of reversible (or hinge inside-out transformable) figures is the Dudeney's Haberdasher's puzzle in which an equilateral triangle is dissected into four pieces, then hinged like a chain, and then is transformed into a square by rotating the hinged pieces. Furthermore, the entire boundary of each figure goes into the inside of the other figure and becomes the dissection lines of the other figure. Many intriguing results on reversibilities of figures have been found in prior research, but most of them are results on polygons. This paper generalizes those results to a wider range of general connected figures. It is shown that two nets obtained by cutting the surface of an arbitrary convex polyhedron along non-intersecting dissection trees are reversible. Moreover, a condition for two nets of an isotetrahedron to be both reversible and tessellative is given.
\end{abstract}

\section{Introduction}\label{sec:Introduction}
A pair of hinged figures $P$ and $Q$ (see Fig.~\ref{fig:dudeney}) is said
to be \emph{reversible} (or \emph{hinge inside-out transformable}) if
$P$ and $Q$ satisfy the following conditions: 

\begin{enumerate}
\item There exists a dissection of $P$ into a finite number of pieces,
  $P_1, P_2, P_3, \ldots, P_n$.  
  A set of dissection lines or curves forms a tree. Such a tree is
  called a \emph{dissection tree}. 
\item Pieces $P_1, P_2, P_3, \ldots, P_n$ can be joined by $n-1$
  hinges located on the perimeter of $P$ like a chain. 
\item If one of the end-pieces of the chain is fixed and rotated, then
  the remaining pieces form $Q$ when rotated clockwise and $P$ when
  rotated counterclockwise. 
\item The entire boundary of $P$ goes into the inside of $Q$ and the
  entire boundary of $Q$ is composed exactly of the edges of the dissection
  tree of $P$. 
\end{enumerate}

\begin{figure}[h]
\parbox[t]{0.32\textwidth}{\vspace{1cm}\centering $P$: an equilateral triangle \\\includegraphics[width=\hsize]{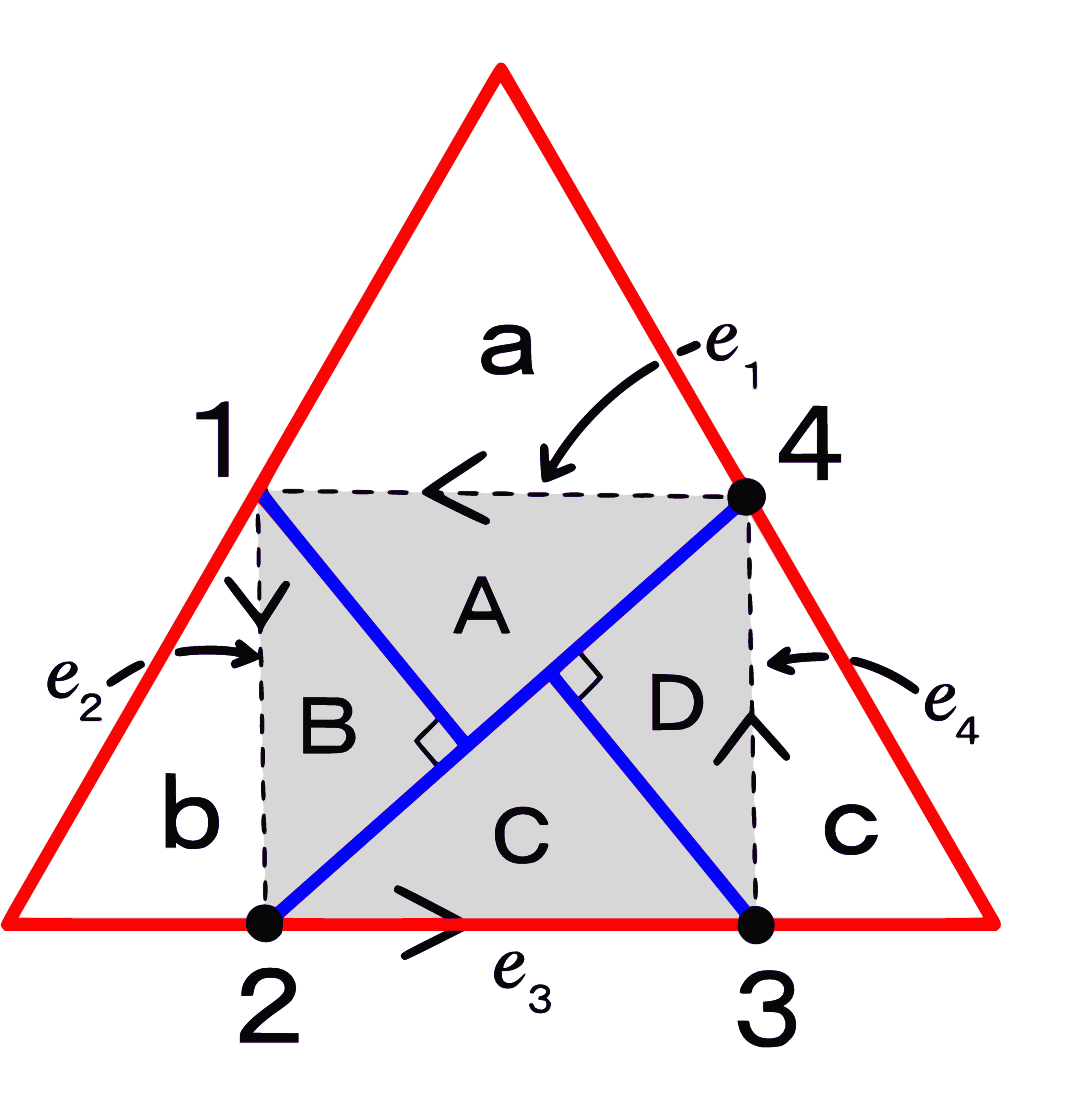}}
\parbox[t]{0.26\textwidth}{\centering ~\\\includegraphics[width=\hsize]{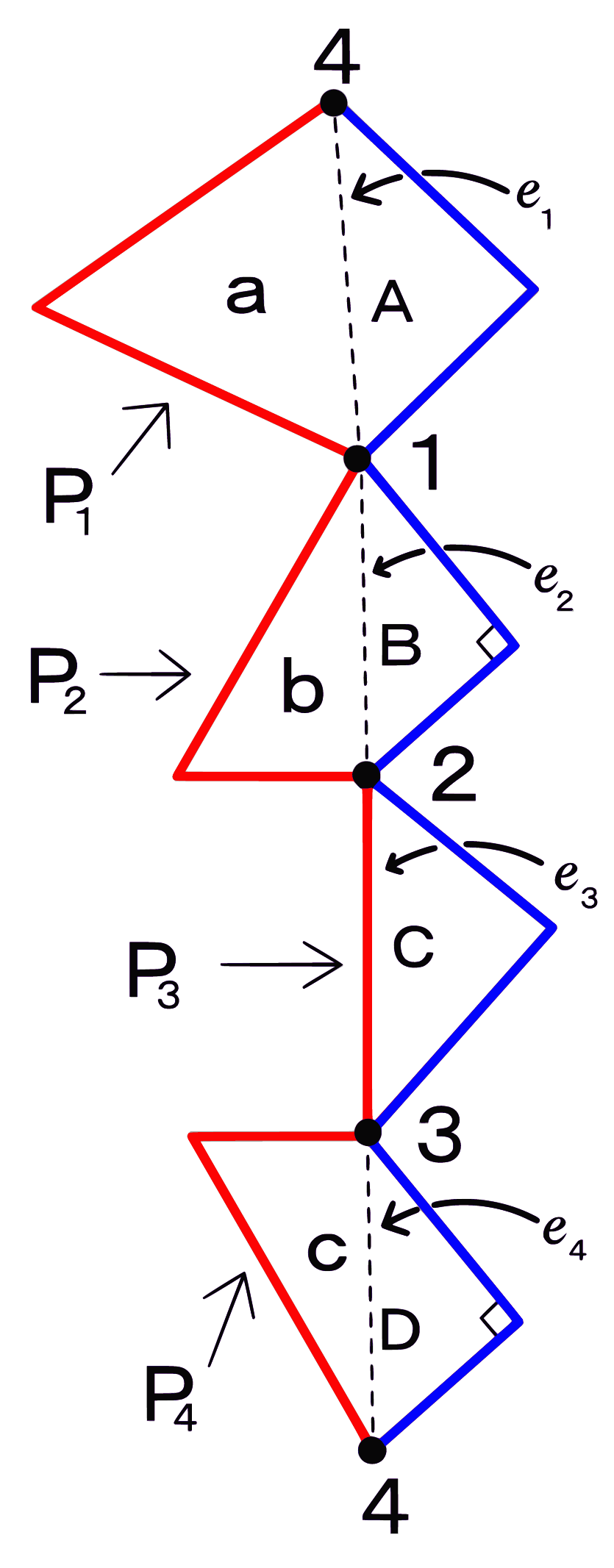}}
\parbox[t]{0.32\textwidth}{\vspace{1cm}\centering $Q$: a square \\\includegraphics[width=\hsize]{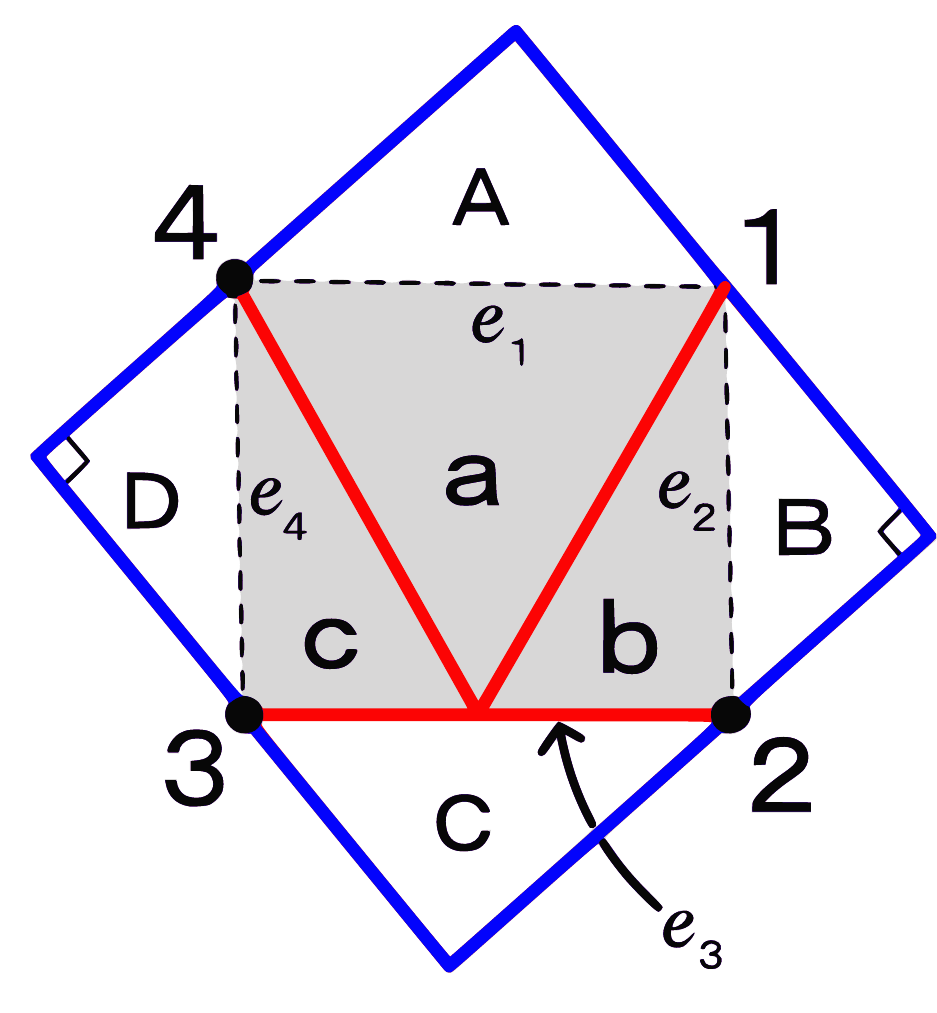}}
\caption{Reversible transformation between $P$ and $Q$.}\label{fig:dudeney}
\end{figure}

The theory of hinged dissections and reversibilities of figures has a
long history and the book by G.~N.~Frederickson~\cite{r11}
contains many interesting results. On the other hand, T.~Abbott
et. al~\cite{r1} proved 
that every pair of polygons $P$ and $Q$ with the same area is hinge
transformable if we don't require the reversible condition. When imposing the
reversible condition, hinge transformable figures have some remarkable
properties which were studied in~\cite{r3,r4,r5,r6,r9,Itoh2014}.

Let $T$ be a closed plane region whose perimeter consists of $n$
curved (or straight line) \emph{segments} $e_1, e_2, \ldots, e_n$  and let
these lines be labeled in clockwise order. Let $T'$ be a closed
region surrounded by the same segments $e_1, e_2, \ldots, e_n$ but in counterclockwise order. We then say that $T'$ is a \emph{conjugate region} of $T$ (Fig.~\ref{fig:conjugate}).

\begin{figure}[H]
\parbox[t]{0.3\textwidth}{\centering $T$ \\\includegraphics[width=0.8\hsize]{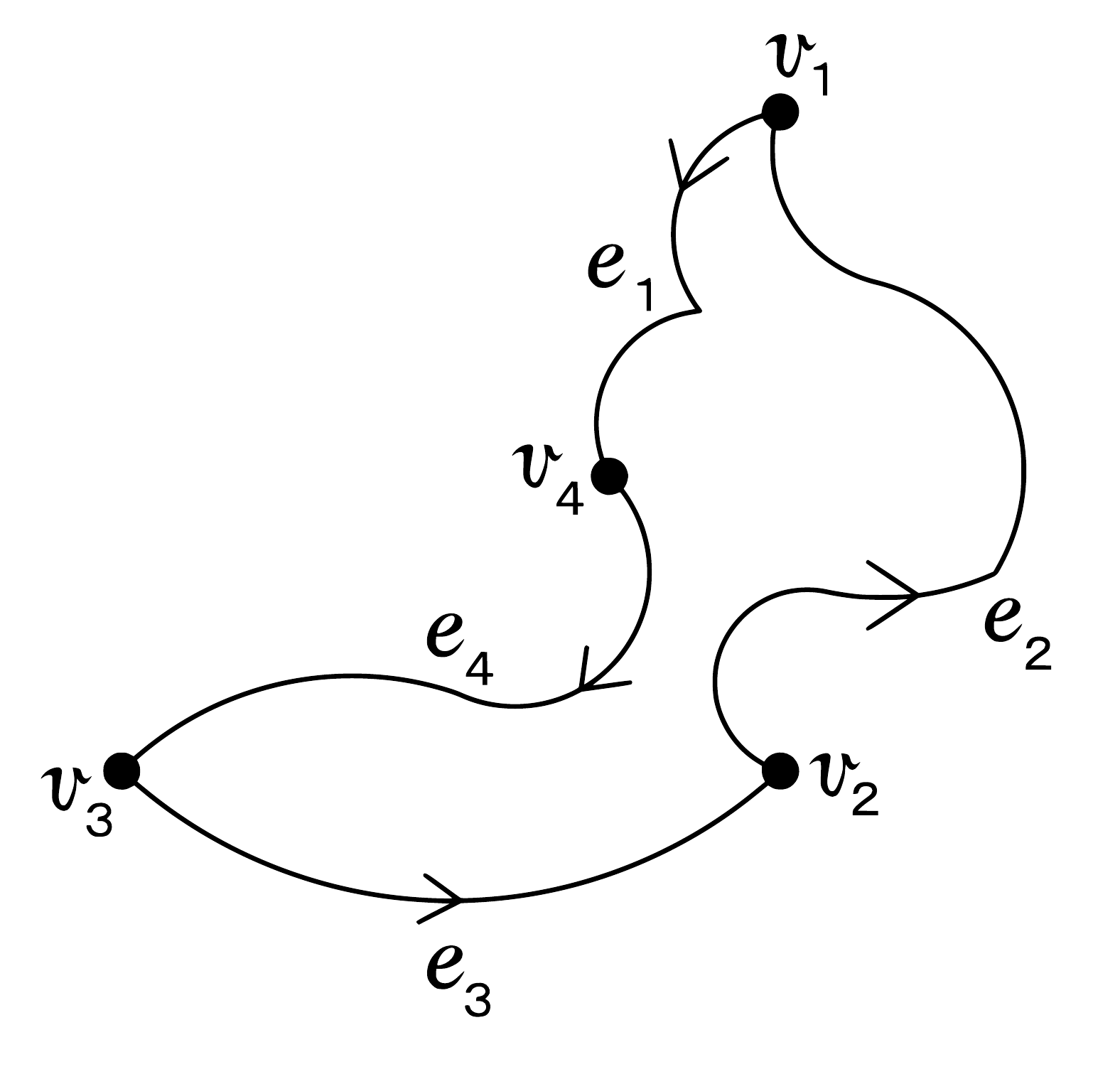}}
\parbox[t]{0.3\textwidth}{\centering ~\\\includegraphics[width=0.5\hsize]{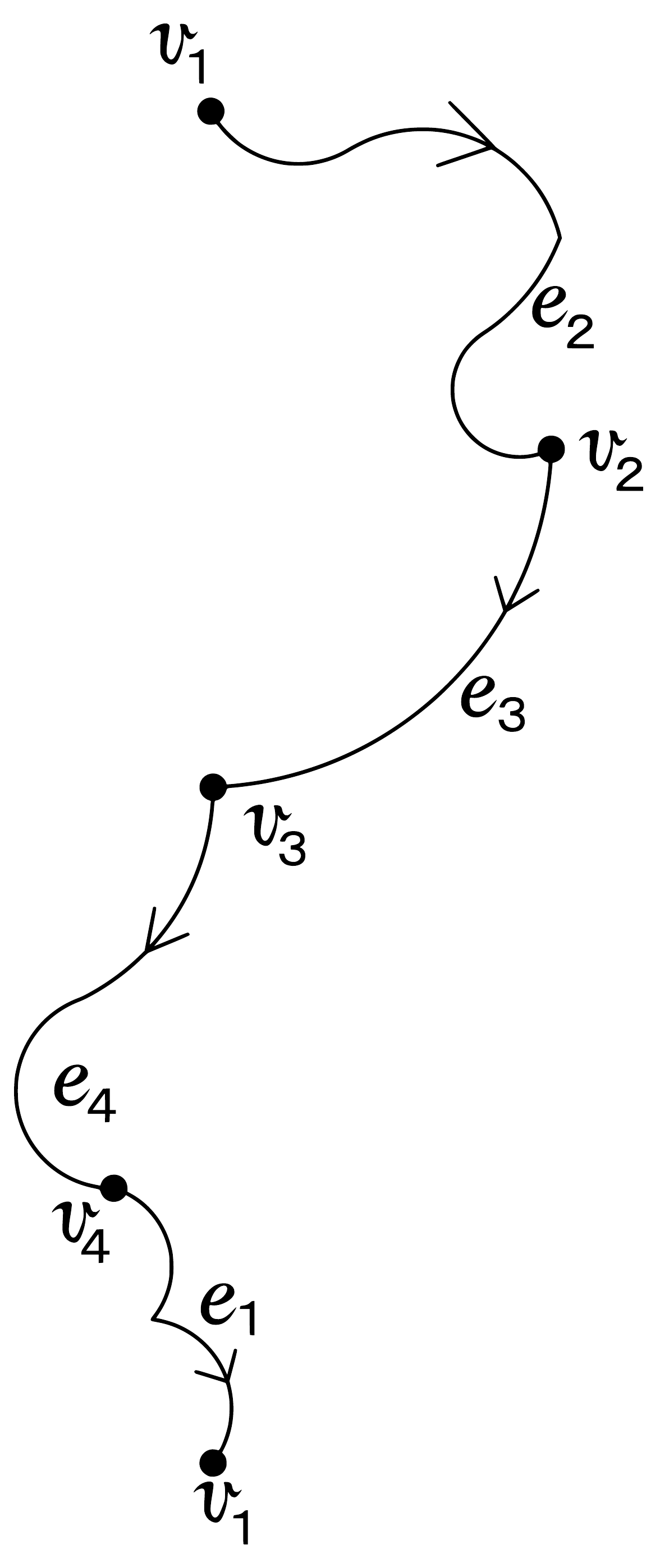}}
\parbox[t]{0.3\textwidth}{\centering $T'$ \\\includegraphics[width=0.8\hsize]{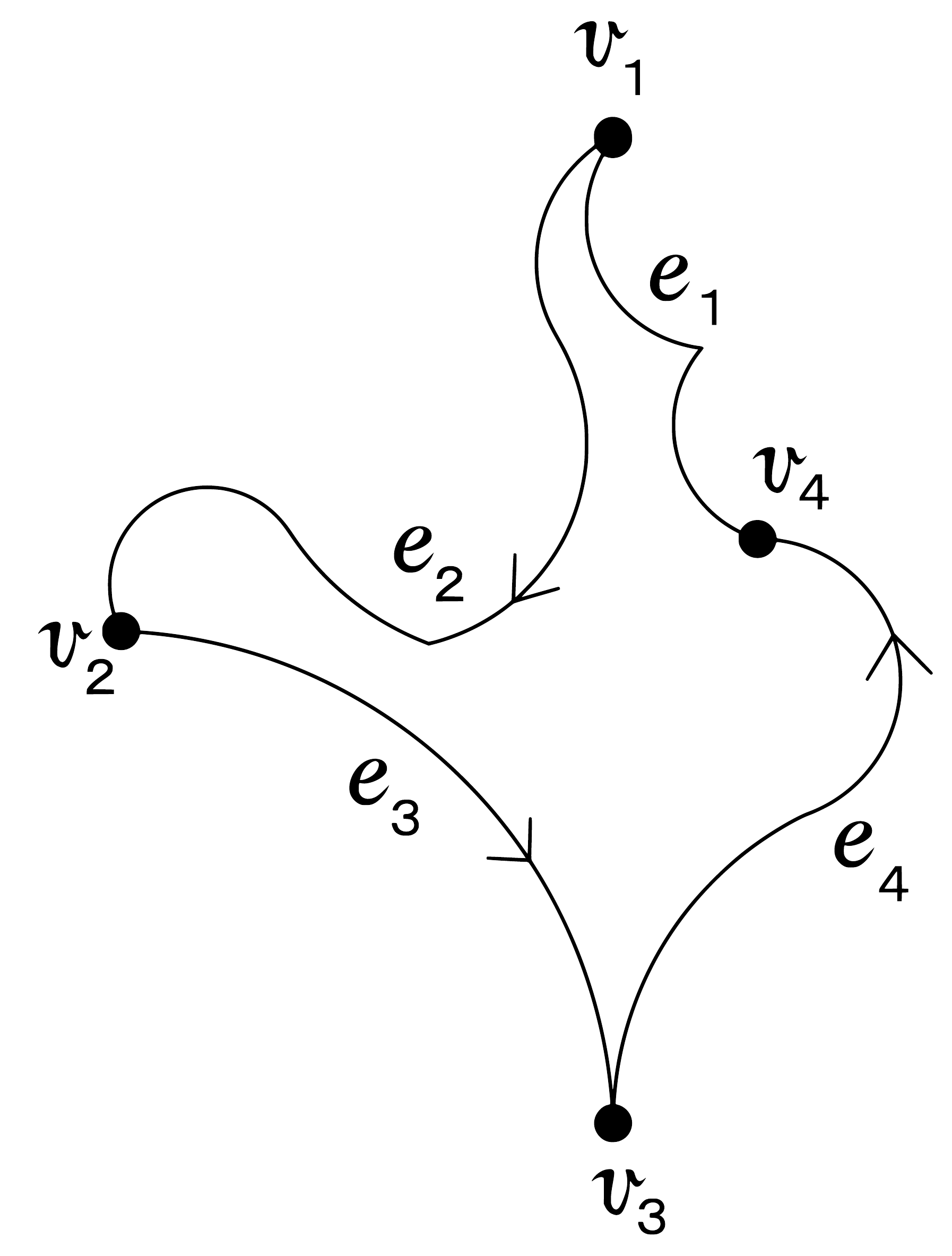}}
\caption{T and one of its conjugate regions.}\label{fig:conjugate}
\end{figure}

Let $P$ be a plane figure. A region $T$ with $n$ vertices
$v_1,\ldots,v_n$  and with $n$ perimeter parts $e_1,\ldots,e_n$  is
called an \emph{inscribed region} of $P$ if all vertices $v_i$
($i=1,\ldots,n$) are located on the 
perimeter of $P$ and $T\subseteq P$. 

A \emph{trunk} of $P$ is a special kind of inscribed region $T$ of
$P$. First, cut out an inscribed region $T$ from $P$
(Fig.~\ref{fig:trunk}(a)). Let $e_i$ ($i=1,\ldots,n$) be the perimeter
part of $T$ joining two vertices $v_{i-1}$ and $v_i$ of $T$, where
$v_0=v_n$. Denote by $P_i$ the piece located outside of $T$ that
contains the perimeter part $e_i$. 
Some $P_i$ may be empty (or just a part $e_i$). Then, hinge each pair of pieces
$P_i$ and $P_{i+1}$ at their common vertex $v_i$ ($1\leq i\leq n-1$);
this results in a chain of pieces $P_i$ ($i=1,2,\ldots,n$) of $P$
(Fig.~\ref{fig:trunk}(b)). 
The chain and $T$ are called $(T, T')$-chain of $P$, and \emph{trunk of
  $P$}, respectively, if an appropriate rotation of the chain forms
$T'$ which is one of the conjugate regions of $T$ with all pieces
$P_i$ packed inside $T'$ without overlaps or gaps. The chain $T'$ is called a
\emph{conjugate trunk} of $P$ (Fig.~\ref{fig:trunk}(c)). 

\begin{figure}[h]
\parbox[t]{0.31\textwidth}{\centering (a)\\
     \includegraphics[width=\hsize]{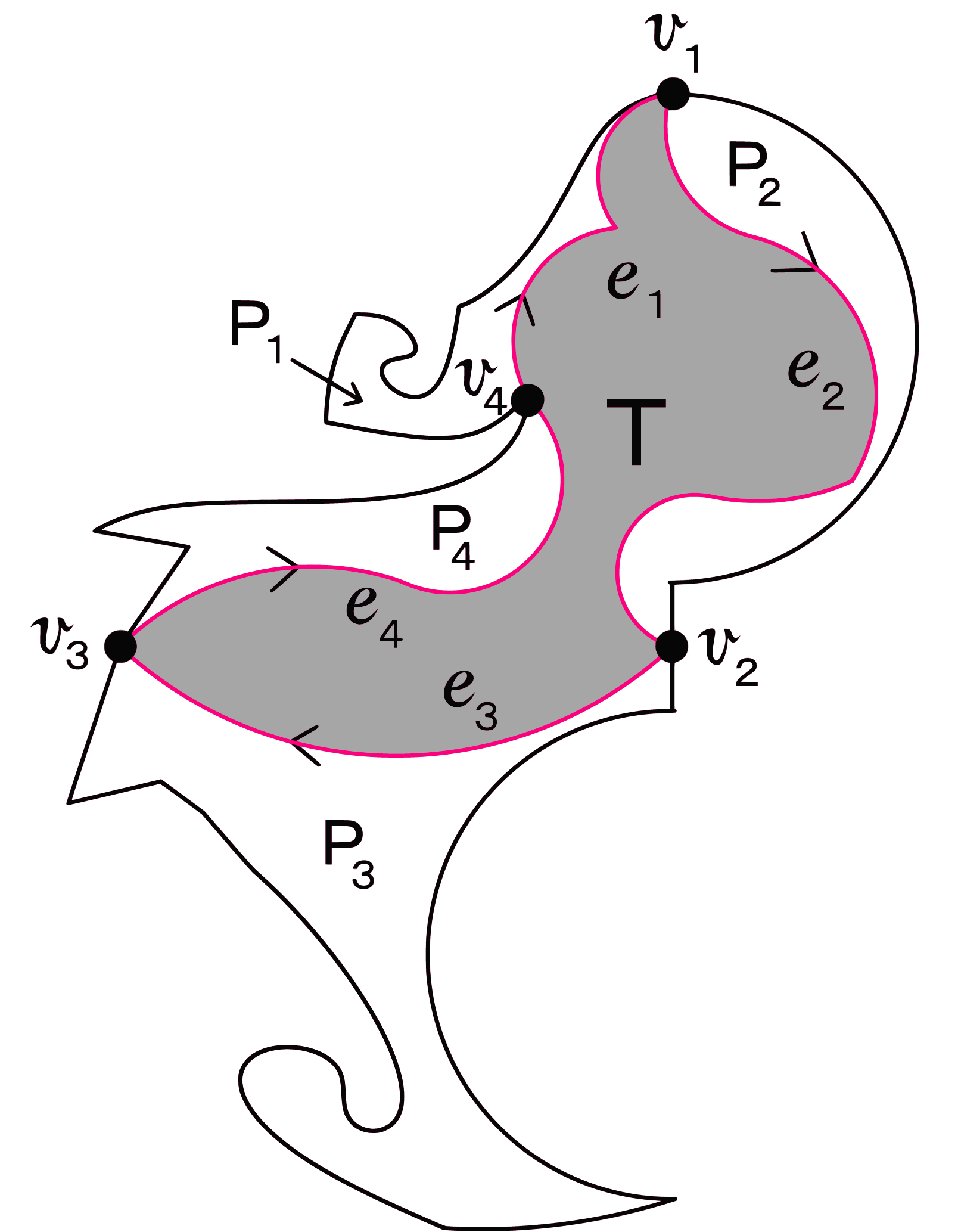} \\
Cut $T$ off from $P$\\$T$: the gray part.}
\parbox[t]{0.31\textwidth}{\centering (b)\\
    \includegraphics[width=1.1\hsize]{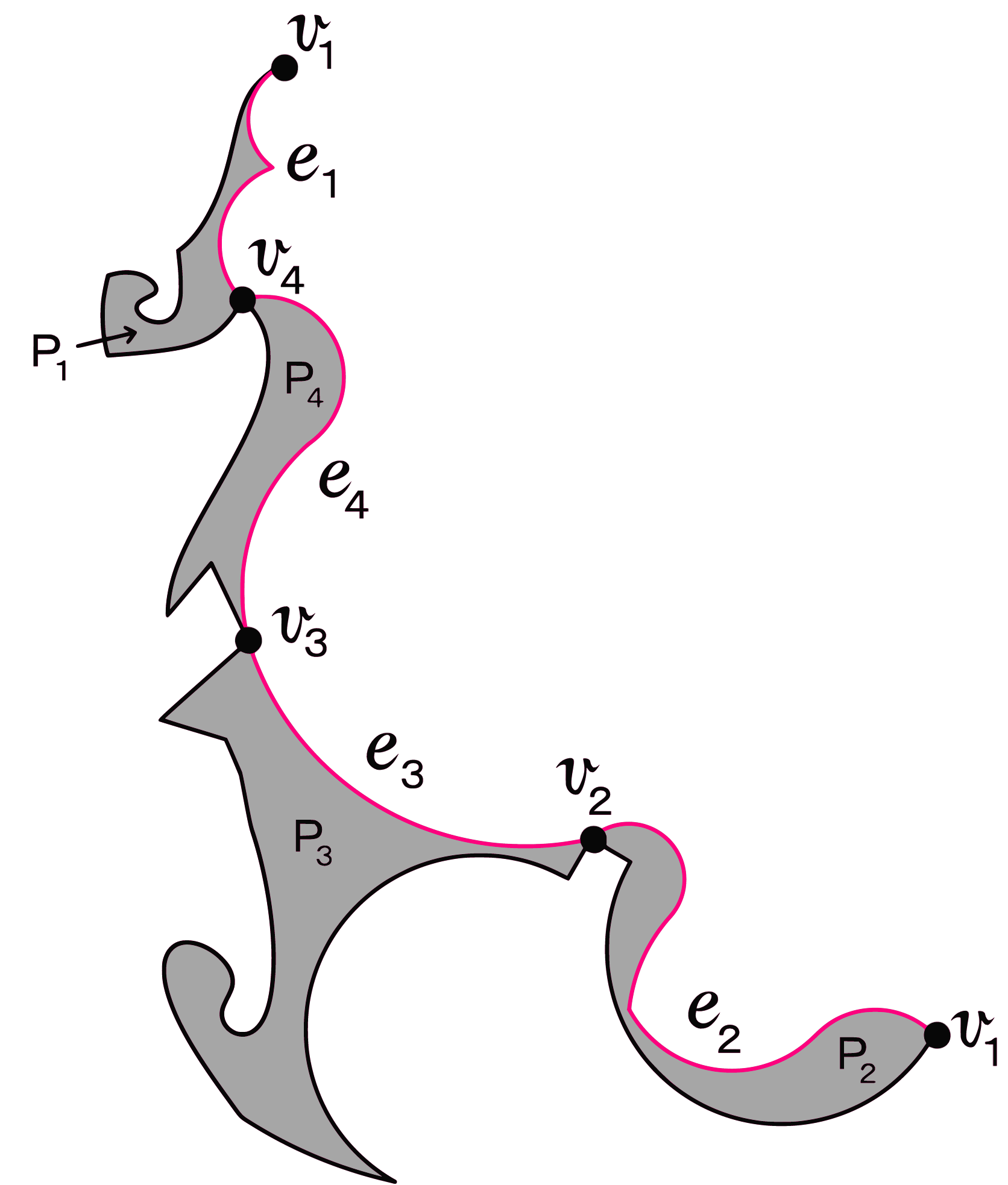}\\
    A $(T,T')$-chain of $P$}
\parbox[t]{0.31\textwidth}{\centering (c)\\
    \includegraphics[width=\hsize]{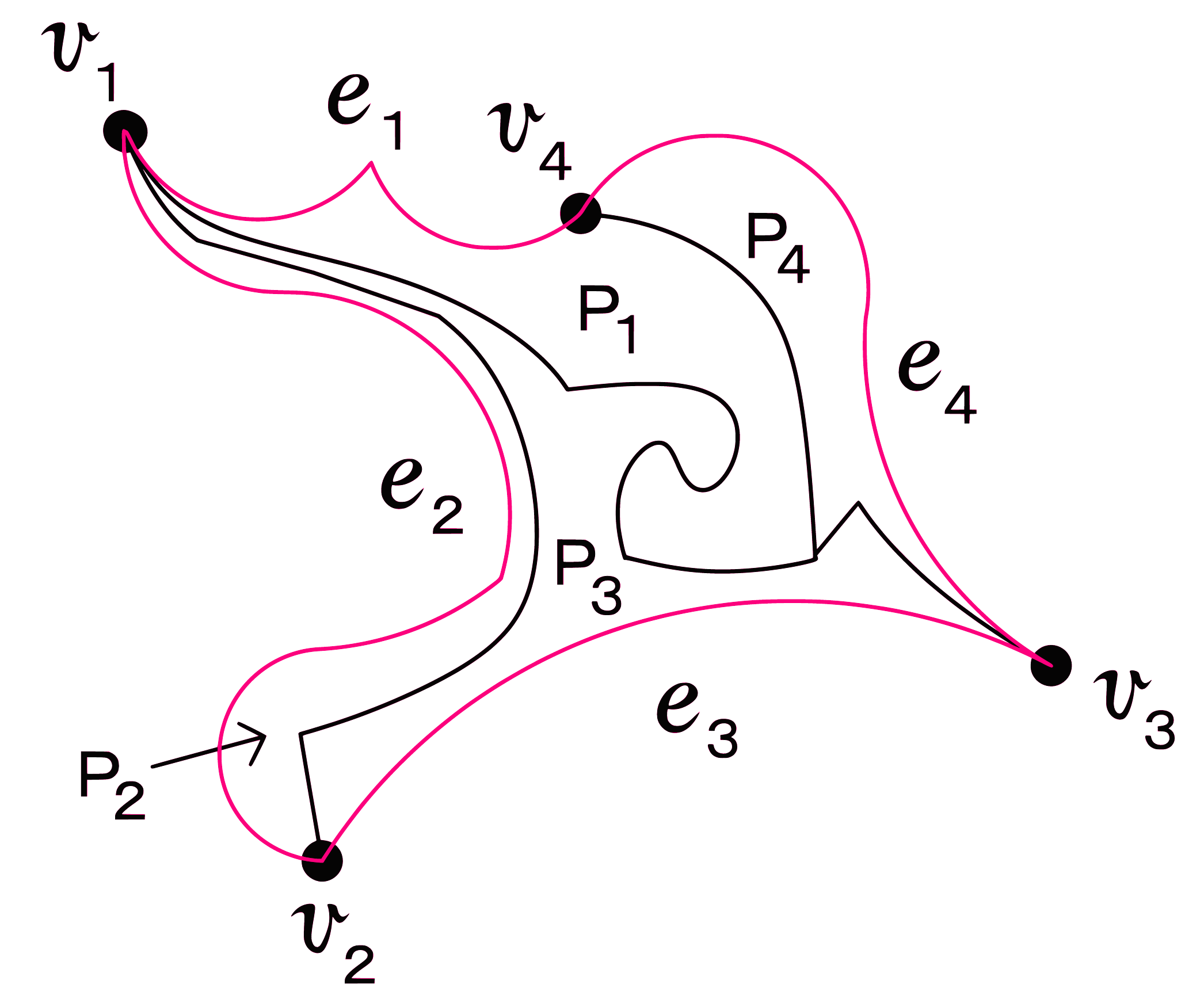}
    $T'$: a conjugate trunk of $P$}
\caption{A trunk $T$ of $P$, a $(T, T')$-chain of $P$ and a conjugate trunk $T'$ of $P$.}\label{fig:trunk}
\end{figure}

Suppose that a figure $P$ has a trunk $T$ and a conjugate trunk $T'$; and a
figure $Q$ has a trunk $T'$ and a conjugate trunk $T$. We then have two
chains, a $(T, T')$-chain of $P$ and a $(T', T)$-chain of $Q$ (Fig.~\ref{fig:chains}).

\begin{figure}[h]
\parbox[t]{0.30\textwidth}{\centering ~\\
     \includegraphics[width=\hsize]{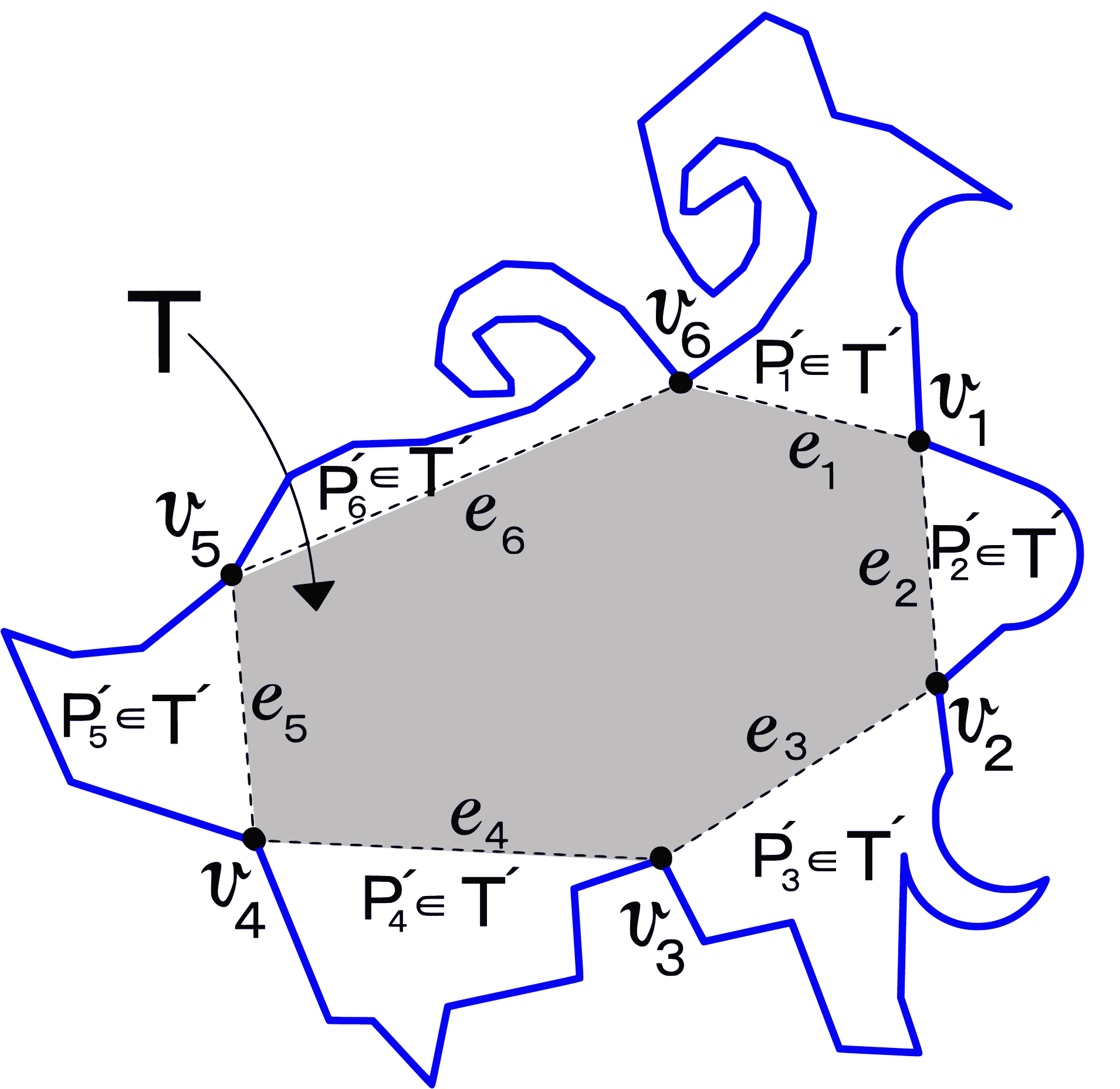} \\
    \vspace{5mm}
    {A $(T,T')$-chain of $P$}\\
    \vspace{5mm}
    {$T'$:}
     \includegraphics[width=\hsize]{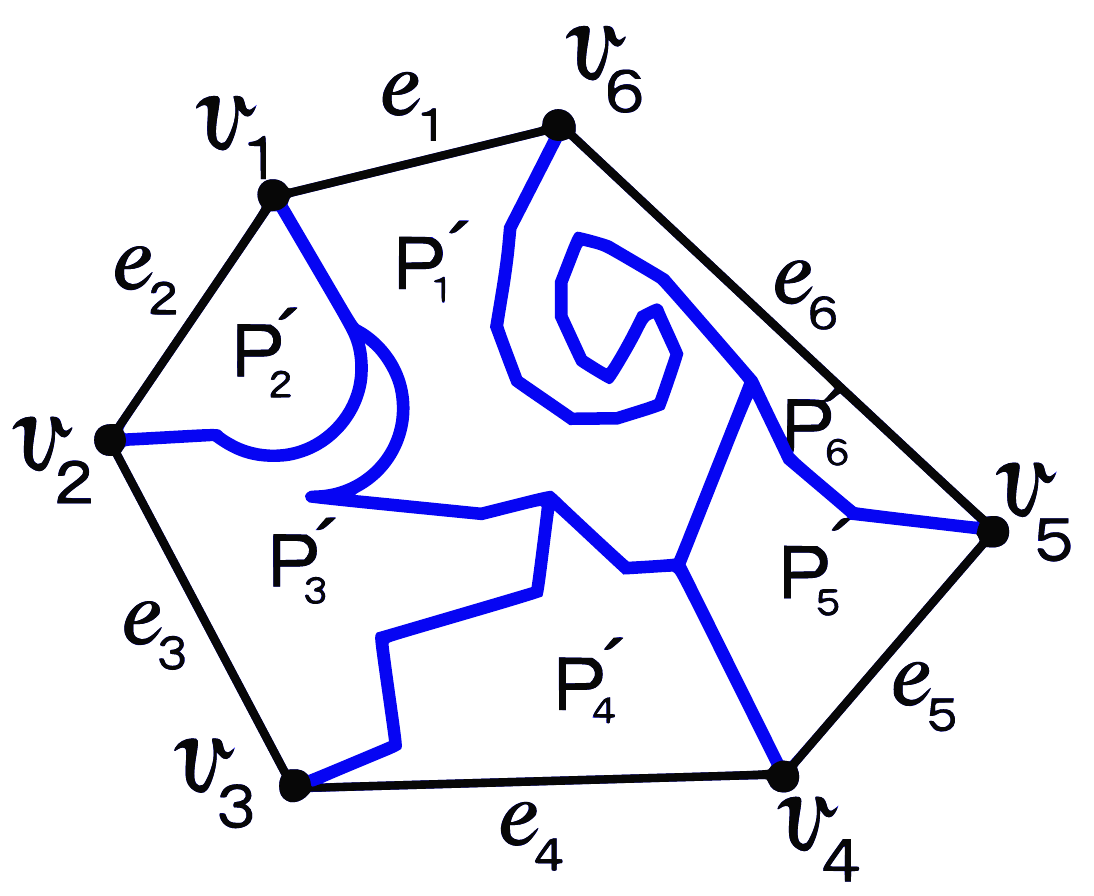} }
\parbox[t]{0.33\textwidth}{\centering ~\\
    \includegraphics[width=0.4\hsize]{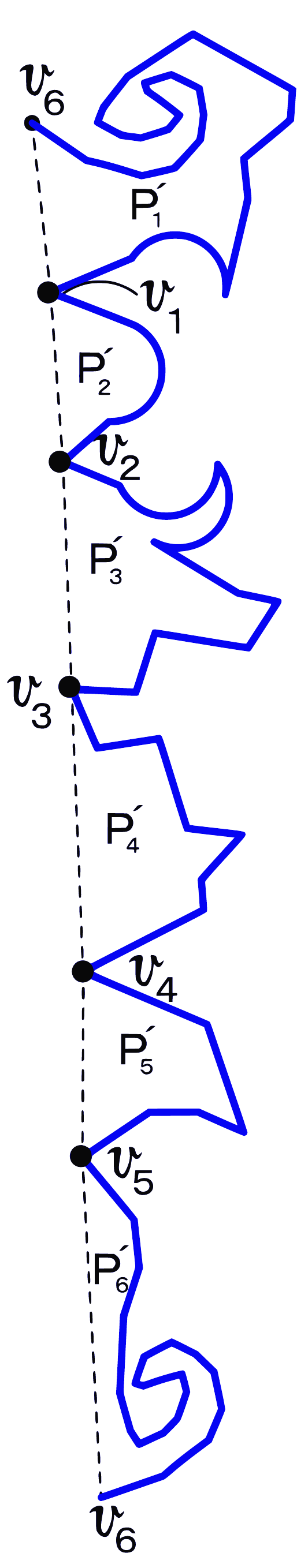}
    \includegraphics[width=0.55\hsize]{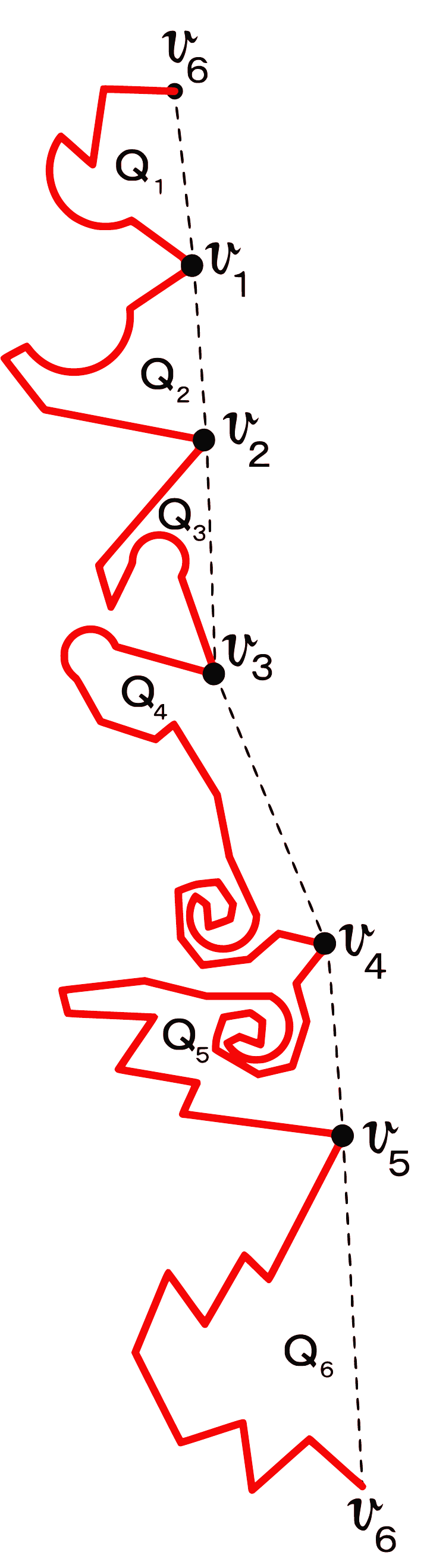}}
\parbox[t]{0.30\textwidth}{\centering ~\\
     \includegraphics[width=1.1\hsize]{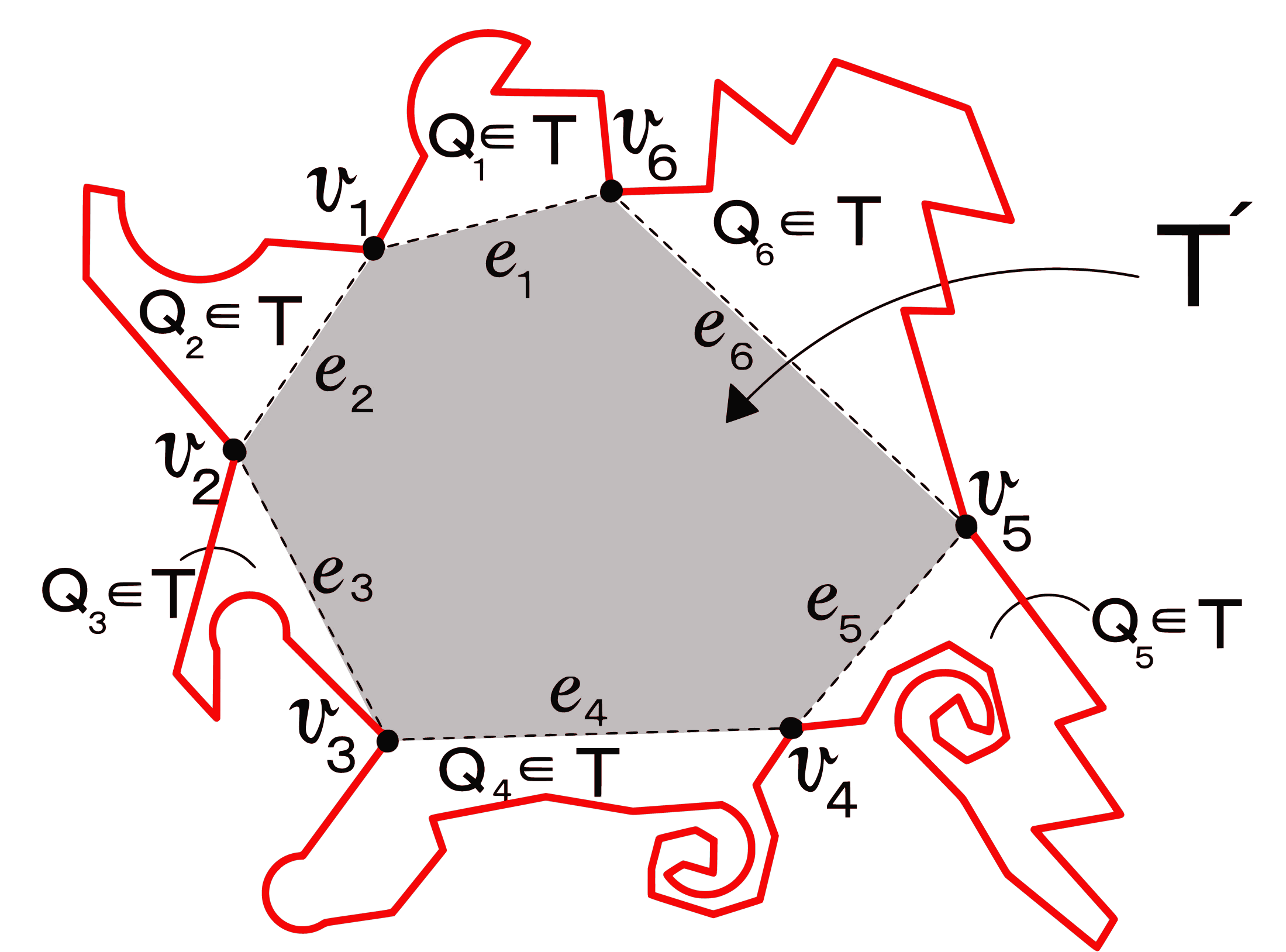} \\
    \vspace{1cm}
    {A $(T',T)$-chain of $Q$}\\
    \vspace{5mm}
   {$T$:}
     \includegraphics[width=\hsize]{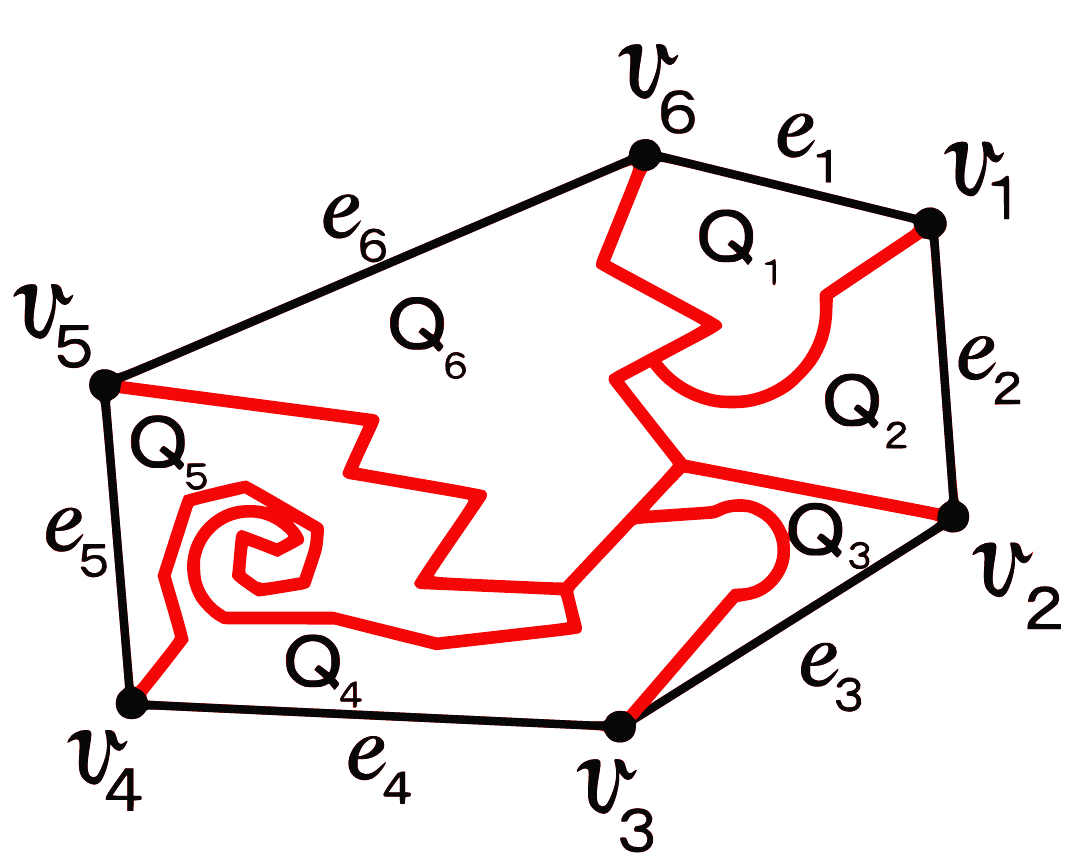} }
\caption{A $(T,T')$-chain of $P$ and a $(T',T)$-chain of $Q$.}\label{fig:chains}
\end{figure}

Combine a $(T, T')$-chain of $P$ with a $(T', T)$-chain of $Q$ such
that each segment of the perimeter, $e_i$, has a piece
$P_i'$ of $P$ on one side (right side) and a piece $Q_i$ of $Q$ on the other side (left side). The chain obtained in this manner is called a \emph{double chain of $(P, Q)$} (Fig.~\ref{fig:doublechain}).

\begin{figure}[h]
\parbox[t]{0.24\textwidth}{\centering ~\\
   \vspace{2cm}
    {$P$:}
     \includegraphics[width=\hsize]{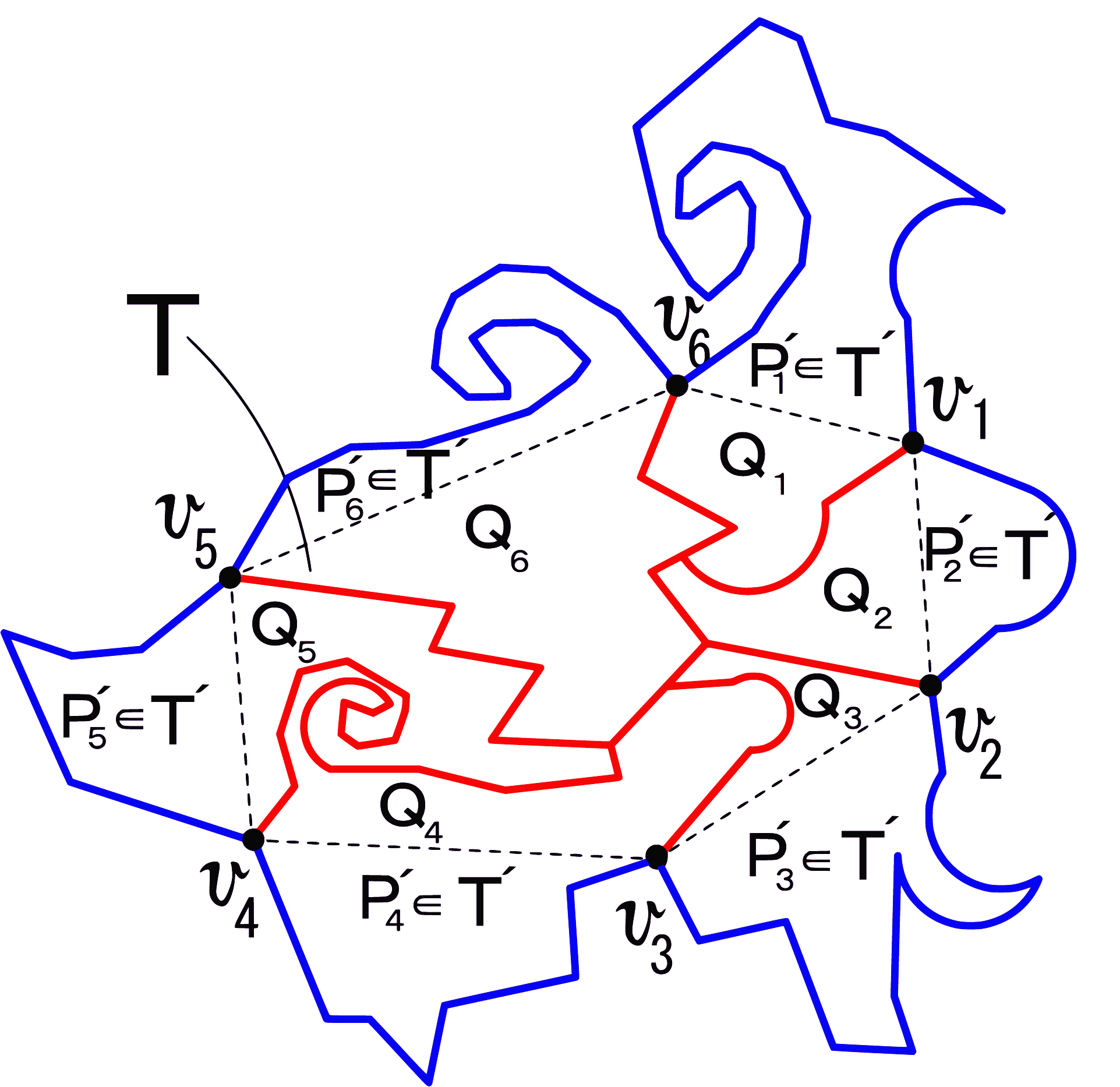} }
\parbox[t]{0.47\textwidth}{\centering ~\\
    \includegraphics[width=0.29\hsize]{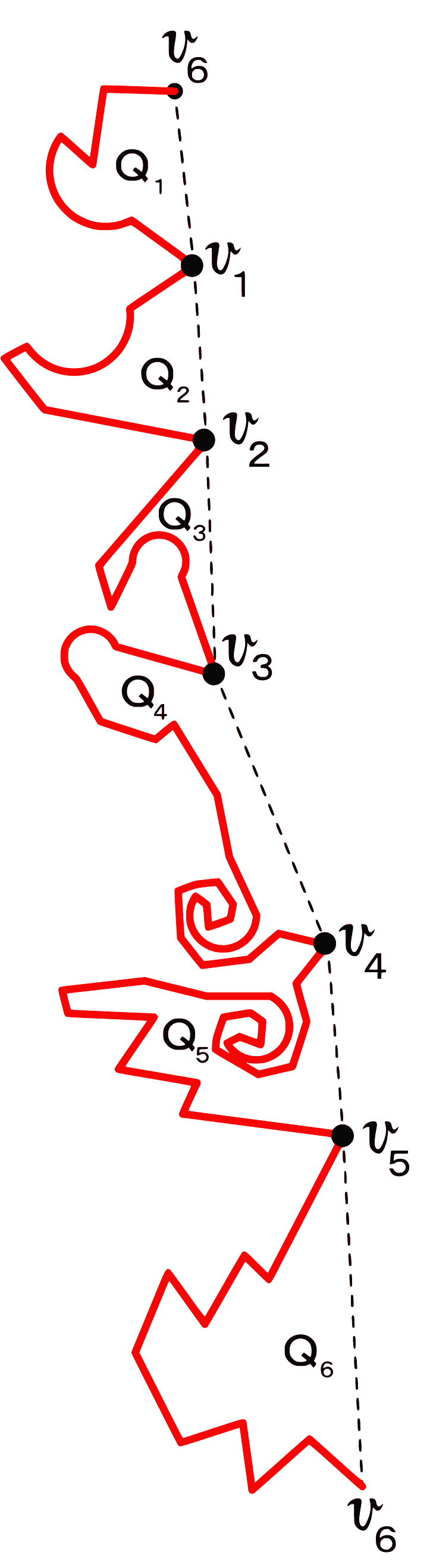}
    \includegraphics[width=0.44\hsize]{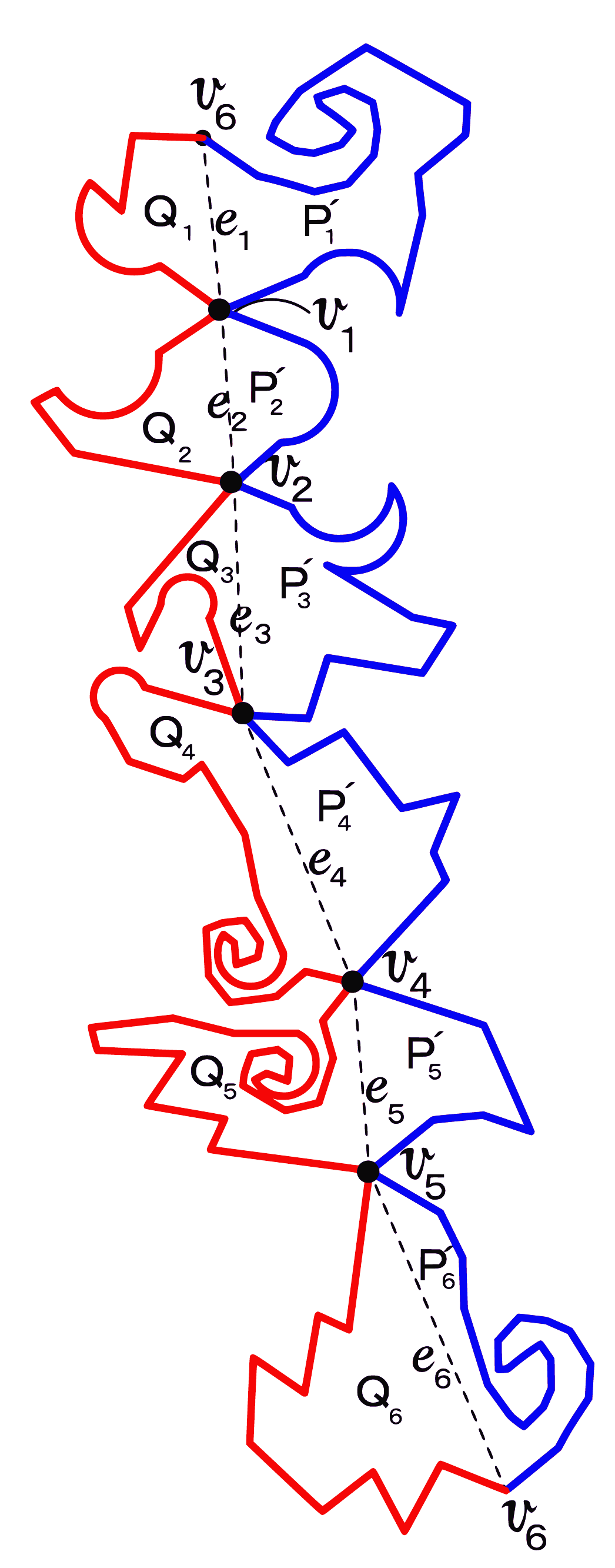}
    \includegraphics[width=0.21\hsize]{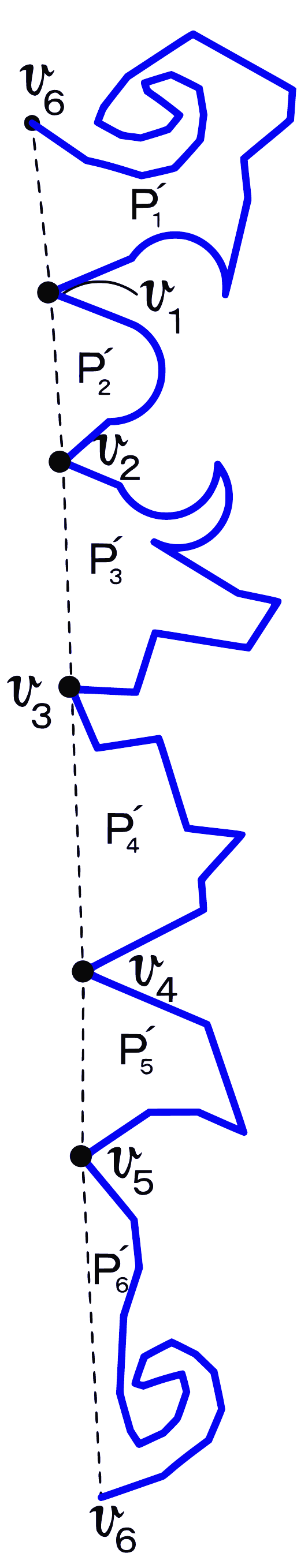}}
\parbox[t]{0.24\textwidth}{\centering ~\\
   \vspace{2cm}
    {$Q$:}
     \includegraphics[width=\hsize]{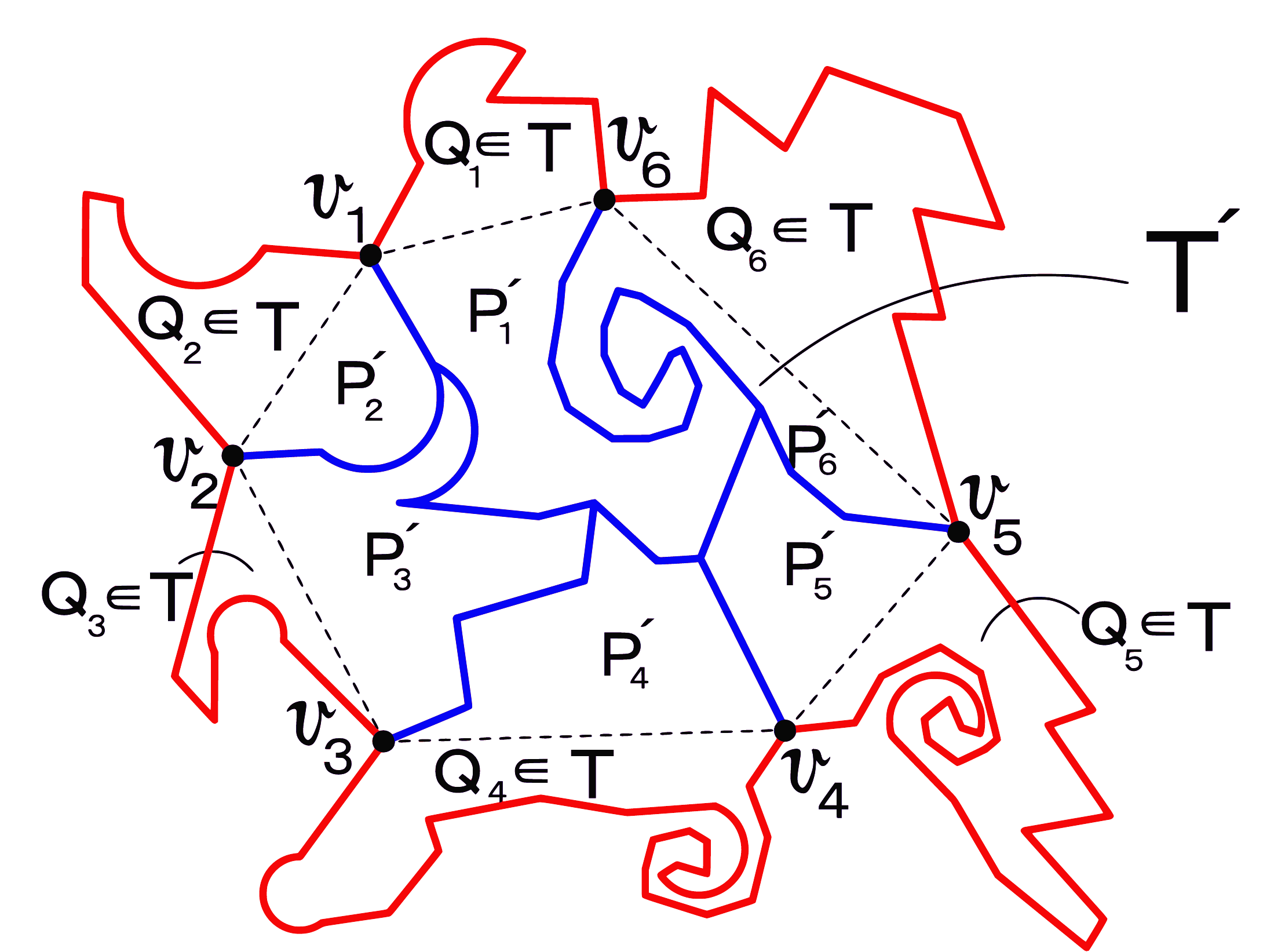} }
\caption{A double chain of $(P, Q)$.}\label{fig:doublechain}
\end{figure}

We say that a piece of a double chain is \emph{empty} if that piece consists of only a
perimeter part $e_i$. If a double chain has an empty piece, then we
distinguish one side of that edge from the other side so that it
satisfies the conditions for reversibility. If one of the end-pieces
(Say $P_1$ and $Q_1$ in Fig.~\ref{fig:doublechain}) of the double
chain of $(P, Q)$ is fixed and the remaining pieces are rotated
clockwise or counterclockwise, then 
figure $P$ and figure $Q$ are obtained respectively (Fig.~\ref{fig:doublechain}). The
following result is obtained from~\cite{r3}.

\begin{theorem}[Reversible Transformations Between Figures]  \label{thm:reversible}
Let $P$ be a figure with trunk $T$ and conjugate trunk $T'$, and let $Q$ have trunk
$T'$ and conjugate trunk $T$. Then $P$ is reversible to $Q$.
\end{theorem}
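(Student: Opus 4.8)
The plan is to use the double chain of $(P,Q)$ constructed above as the explicit witness of reversibility and then to verify conditions (1)--(4) of the definition for it. First I would fix notation: write $P=T\cup P_1\cup\dots\cup P_n$ for the trunk decomposition of $P$, with $P_i$ the outer piece carrying the boundary arc $e_i$ of $T$, and $Q=T'\cup Q_1\cup\dots\cup Q_n$ for $Q$, with $Q_i$ the outer piece carrying the arc $e_i$ of $\partial T'$; since $T$ and $T'$ have the same edges $e_1,\dots,e_n$, they also share the same vertices $v_1,\dots,v_n$. Gluing $P_i$ to $Q_i$ along $e_i$ produces a piece $R_i$, and hinging $R_i$ to $R_{i+1}$ at $v_i$ is exactly the double chain; it has $n$ pieces and $n-1$ hinges, each hinge lying on $\partial P$ (a vertex of $T$) and on $\partial Q$ (a vertex of $T'$), so condition (2) holds by construction.

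The core step is that the two foldings of this one hinged chain reconstruct $P$ and $Q$. For the $P$ side I would fold so that the sub-pieces $P_1,\dots,P_n$ occupy the positions they have in $P$, wrapped around the trunk $T$; then the arcs $e_i$ trace $\partial T$ in clockwise order and the sub-pieces $Q_i$ lie on the inner side of those arcs. Because $T$ is the conjugate trunk of $Q$, the $(T',T)$-chain of $Q$ has a folding in which the $Q_i$ pack inside $T$ with no gaps or overlaps, abutting exactly these arcs $e_i$ in the same clockwise order; that folding therefore coincides with the one just described, the $Q_i$ tile $T$, and the double chain as a whole equals $T\cup P_1\cup\dots\cup P_n=P$. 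Folding the other way and using instead that $T'$ is the conjugate trunk of $P$ (so that the $P_i$ pack inside $T'$) yields $Q$ by the same reasoning; this gives condition (3).

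It remains to treat conditions (1) and (4). The pieces of the dissection of $P$ are the $R_i$, and its cut curves are the sets $R_i\cap R_{i+1}$: in the $P$-configuration the outer sub-pieces $P_i,P_{i+1}$ touch only at $v_i$, while $Q_i,Q_{i+1}$ meet along an arc issuing from $v_i$ into the interior of $T$, so each $R_i\cap R_{i+1}$ is a single arc; these arcs are precisely the interior walls of the tiling of $T$ by the $Q_i$, and the gap-free, overlap-free character of that tiling forces their union to contain no closed loop, so it is a tree -- condition (1). That same union of walls is simultaneously the image of $\partial Q$ under the folding (the perimeter of $Q$ is made up of the arcs bounding the $Q_i$ other than the $e_i$), so condition (4) holds for $\partial Q$; the symmetric statement for $\partial P$ comes from the $Q$-configuration, where the $P_i$ tile $T'$ and $\partial P$ is carried into the interior of $Q$ onto the dissection tree of $Q$. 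The part I expect to be most delicate is precisely this identification in (4), together with the tree property in (1): one has to check carefully that the walls of the $Q_i$-tiling of $T$ exhaust $\partial Q$ and contain no cycle (and dually for the $P_i$-tiling of $T'$), that after the folding no arc of $\partial P$ is left on $\partial Q$ -- equivalently that $T$ and $T'$ are non-degenerate trunks -- and that the empty-piece convention recorded above takes care of any degenerate arc $e_i$ that would otherwise sit on a boundary.
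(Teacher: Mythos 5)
Your proposal takes essentially the same route as the paper: the paper justifies this theorem only by the double-chain construction described immediately before the statement (glue each $P_i$ to the corresponding $Q_i$ along $e_i$, hinge consecutive pieces at the vertices $v_i$, and rotate one way to obtain $P$, the other way to obtain $Q$), deferring details to its cited source, and your argument is exactly that construction. Your added verifications --- that the conjugate-trunk packing makes the $Q_i$ tile $T$ (resp.\ the $P_i$ tile $T'$) so the folded chain equals $P$ (resp.\ $Q$), and that conditions (1)--(4) of reversibility then hold, with the tree property resting on each $Q_i$ abutting $\partial T$ along its arc $e_i$ --- correctly flesh out what the paper leaves implicit.
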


\paragraph{Remarks}
\begin{enumerate}
\item In Theorem~\ref{thm:reversible}, figure $P$ which is the union
  of $T$ and $n$ pieces $P_i'$ of the conjugate trunk $T'$ reversibly
  transforms into figure $Q$ which is the union of $T'$ and $n$
  pieces of $T$. 
\item “Harberdasher's puzzle” by H. Dudeney is also one such reversible
pair. In this puzzle, the figures $P$ and $Q$ are an
equilateral triangle and a square, respectively. The trunk $T$ and conjugate
trunk $T'$ are the identical parallelogram $T$ (the gray part in Fig.~\ref{fig:dudeney}).
\end{enumerate}

\section{Reversible nets of polyhedra}
A \emph{dissection tree} $D$ of a polyhedron $P$ is a tree drawn on
the surface of $P$ that spans all vertices of $P$. 
Cutting the surface of $P$ along $D$ results in a \emph{net} of
$P$. Notice that nets of some polyhedron $P$ may have
self-overlapping parts (Fig.~\ref{fig:overlap}). We allow such cases when discussing
reversible transformation of nets.  

\begin{figure}[h]
\parbox[t]{0.31\textwidth}{\centering ~\\
    \vspace{3mm}
     \includegraphics[scale=0.15]{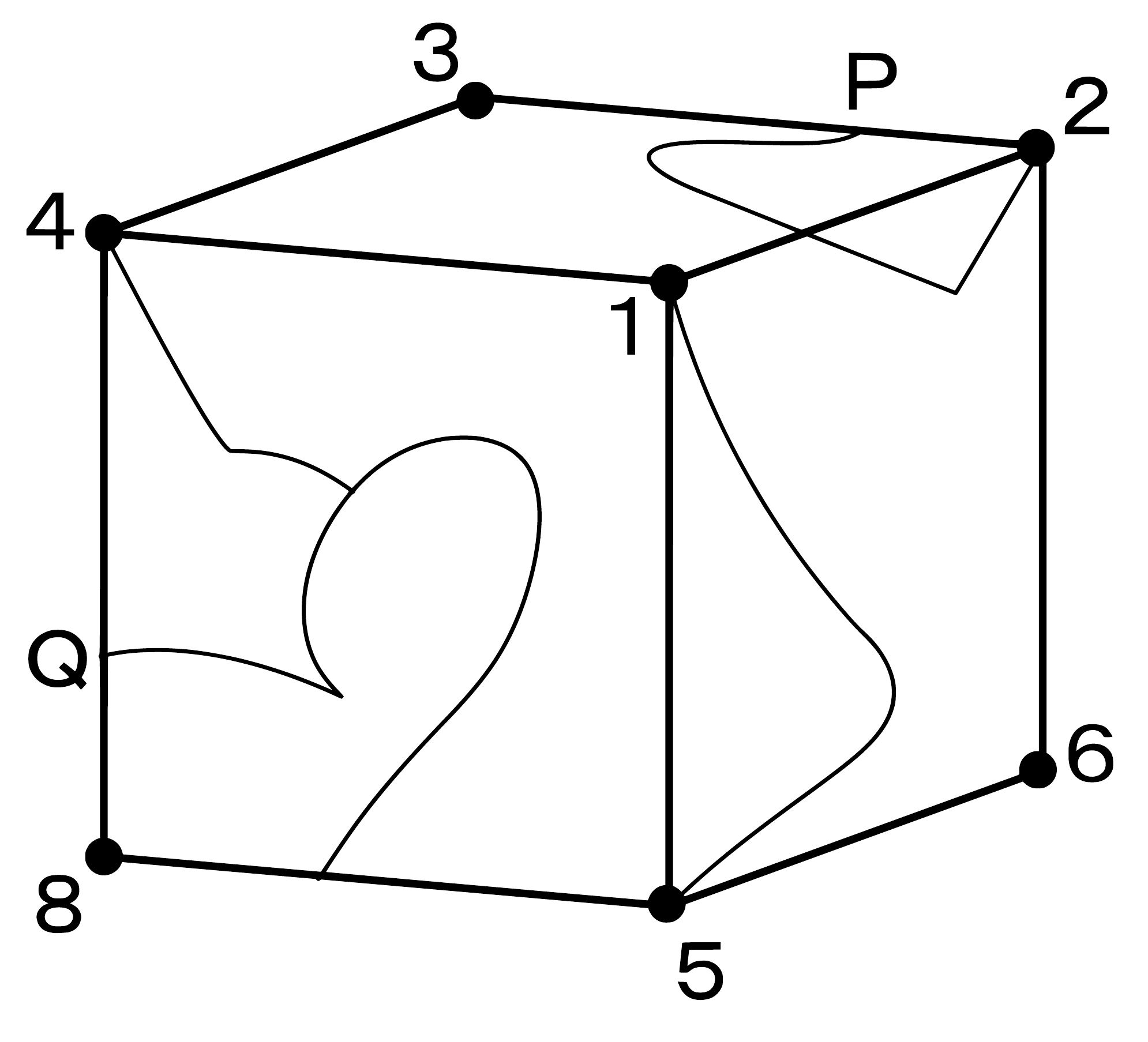} \\
    {front view} }
\parbox[t]{0.31\textwidth}{\centering ~\\
    \includegraphics[scale=0.18]{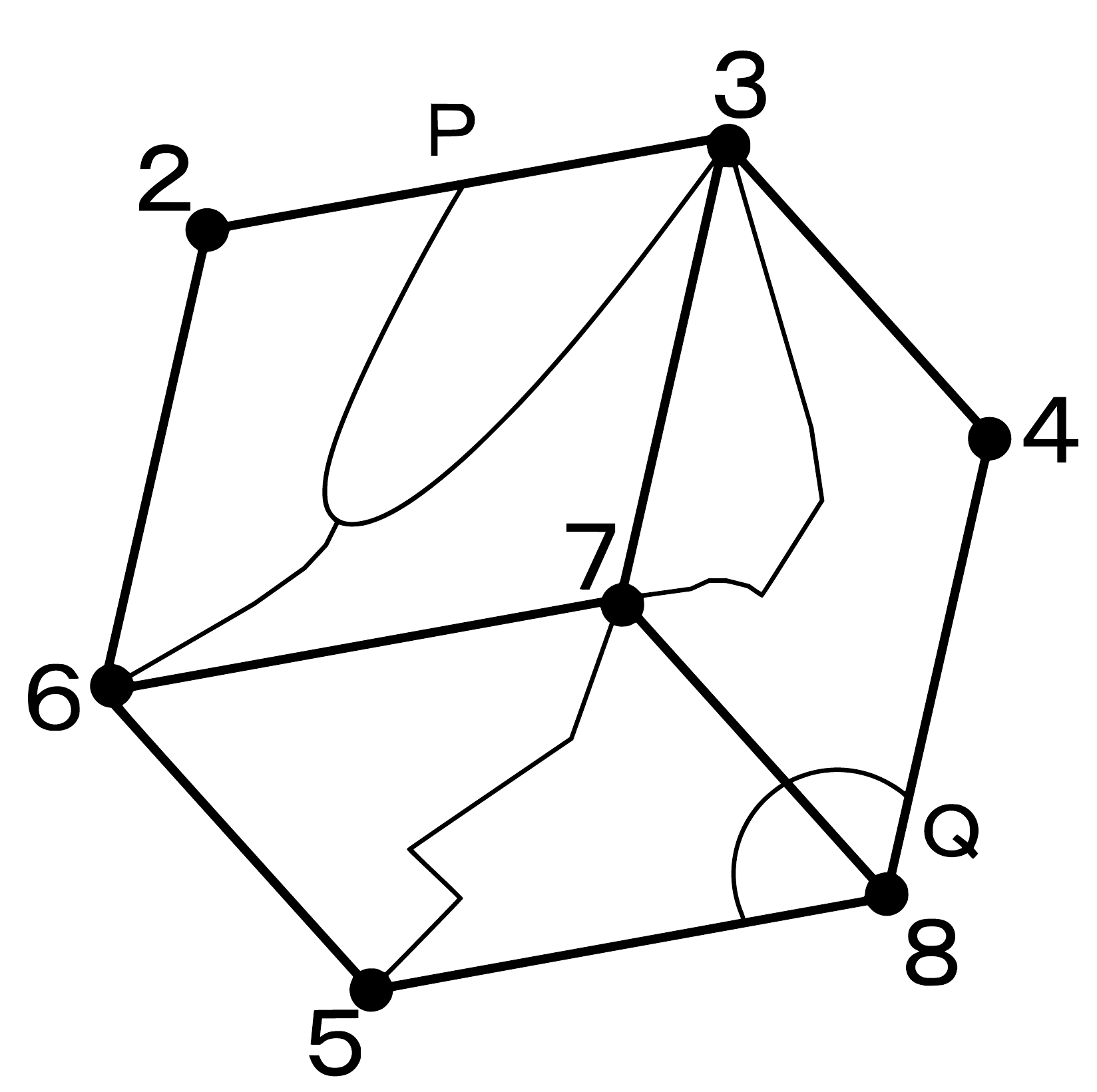}\\
    {back view} }
\parbox[t]{0.31\textwidth}{\centering ~\\
     \includegraphics[scale=0.7]{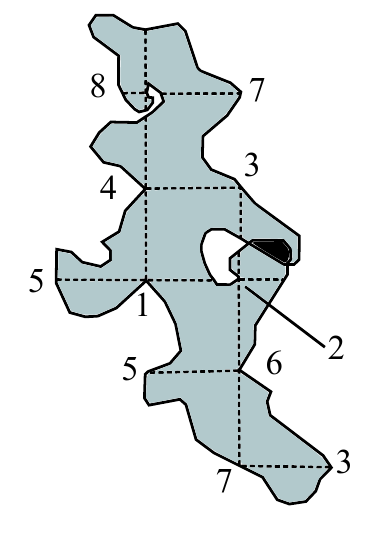} \\
    }
\caption{A net of a cube with self-overlapping part (the overlap is the black part).}\label{fig:overlap}
\end{figure}

\begin{theorem} \label{thm:reversible2}
Let $P$ be a polyhedron with $n$ vertices $v_1,\ldots,v_n$ and let $D_i$
($i=1,2$) be dissection trees on the surface of $P$. Denote by
$N_i$ ($i=1,2$) the nets of $P$ obtained by cutting $P$ along $D_i$
($i=1,2$), respectively. 
If  $D_1$ and $D_2$ don't properly cross,
then the pair of nets $N_1$ and $N_2$ is reversible, and has a double
chain composed of $n$ pieces. 
\end{theorem}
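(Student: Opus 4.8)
\medskip\noindent\textbf{Proof idea.}
The plan is to reduce the statement to Theorem~\ref{thm:reversible}: I will produce a region $T$ that is a trunk of $N_1$ with conjugate trunk $T'$, in such a way that the very same $T'$ is a trunk of $N_2$ with conjugate trunk $T$. Theorem~\ref{thm:reversible} then yields that $N_1$ is reversible to $N_2$, the common double chain being the superposition of the $(T,T')$-chain of $N_1$ and the $(T',T)$-chain of $N_2$.

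To construct $T$ and $T'$ I would look at the surface of $P$ carrying both trees at once. Since the surface of $P$ is (topologically) a sphere and, by hypothesis, $D_1$ and $D_2$ meet only at the $n$ vertices $v_1,\dots,v_n$, the union $D_1\cup D_2$ is an embedded plane graph with $n$ vertices and $2(n-1)-k$ edges, where $k$ is the number of edges shared by $D_1$ and $D_2$; Euler's formula then makes it cut the surface into $n-k$ regions. Cutting $P$ along $D_1\cup D_2$ therefore produces $n-k$ flat pieces $R_1,\dots,R_{n-k}$, each bounded by arcs that belong either to $D_1$ or to $D_2$; together with the $k$ empty pieces contributed by the shared edges this gives exactly $n$ pieces, matching the asserted count. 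Regluing the $R_j$'s along the cuts coming from $D_2$ re-forms $N_1$ (whose boundary is then the doubled tree $D_1$), while regluing them along the cuts coming from $D_1$ re-forms $N_2$ (boundary the doubled tree $D_2$). In particular the whole boundary of $N_1$ reappears inside $N_2$ as the cut locus $D_1$, and symmetrically, so condition~(4) in the definition of reversibility is automatic.

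The heart of the argument is to arrange these $n$ pieces so that both regluings are carried out by one and the same sequence of hinges, i.e.\ so that they genuinely form a double chain of $(N_1,N_2)$. One must check that the edges of $D_2$ link the pieces into a \emph{path} rather than a branching tree, that the edges of $D_1$ do likewise, and that, after singling out one piece as the body, folding the chain one way wraps the remaining pieces around it to tile $N_1$ while folding it the other way tiles $N_2$; the region occupied by the body together with the folded pieces is then the trunk $T$, and its $n$ perimeter parts are precisely the $n$ vertices of $P$. Here the tree property of $D_1$ and $D_2$ — no cycles, so every leaf can be rotated away freely — together with the non-crossing hypothesis are exactly what is needed: a proper crossing would force a hinge point into the interior of a piece and destroy the chain.

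I expect the genuine obstacle to be this last point: proving that rotating the chain really does pack the pieces into $N_1$ (respectively $N_2$) with neither overlap nor gap. Because self-overlapping nets are explicitly allowed, ``packing'' should be purely combinatorial — a statement about the cyclic order in which the pieces meet around each copy of a vertex of $P$ — rather than metric; the delicate part is to verify that this cyclic order, as prescribed by the plane graph $D_1\cup D_2$ on the sphere, is consistent with a planar hinged chain all of whose hinges remain on the perimeter throughout the motion. Carrying out this verification vertex by vertex, and exploiting the planarity of $D_1\cup D_2$ on the sphere, is the route I would take.
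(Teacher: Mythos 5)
There is a genuine gap: you never actually construct the trunk $T$ and conjugate trunk $T'$ that your own plan requires, and as a result you end up trying to re-prove the packing/chain property that Theorem~\ref{thm:reversible} is there to supply. The paper's key idea, which is missing from your sketch, is a \emph{separating cycle}: because $D_1$ and $D_2$ do not properly cross, there is a closed Jordan curve $C$ on the surface of $P$ (passing through the $n$ vertices) with $D_1$ on one side and $D_2$ on the other. Cutting along $D_1$ leaves the $D_2$-side of $C$ intact, so $N_1$ contains that side as an inscribed region $T$ with boundary $C$; symmetrically, $N_2$ contains the $D_1$-side as the conjugate region $T'$. Thus $N_1$ has trunk $T$ and conjugate trunk $T'$, $N_2$ has trunk $T'$ and conjugate trunk $T$, and Theorem~\ref{thm:reversible} immediately gives reversibility, with the $n$ arcs of $C$ between consecutive vertices accounting for the $n$ pieces of the double chain. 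Nothing about overlap-free packing or the pieces forming a path rather than a branching tree needs to be verified by hand: the cyclic order of the hinges along $C$ does that for free.

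Your alternative route --- cutting along $D_1\cup D_2$, counting faces by Euler's formula, and then arguing directly that the resulting pieces can be hinged and rotated to tile $N_1$ and $N_2$ --- leaves exactly the hard step open, and you say so yourself (``I expect the genuine obstacle to be this last point''). That step is the entire content of Theorem~\ref{thm:reversible}, so as written the proposal is a reduction without the reduction actually performed. Two smaller issues: your Euler count assumes $D_1$ and $D_2$ meet only at the $n$ vertices or along whole shared edges, whereas ``don't properly cross'' also allows touchings at interior points of edges; and the pieces you obtain from $D_1\cup D_2$ are the double-chain pieces only after the hinge structure along the (unmentioned) separating curve is identified. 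Introducing $C$ repairs all of this and collapses the proof to a few lines.
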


\begin{proof} 
Suppose that dissection trees $D_1$ (the red tree) and $D_2$ (the green
tree) on the surface of $P$ do not properly cross
(Fig.~\ref{fig:parcel}(a)). 
Then there exists a closed Jordan curve on the surface of $P$,
which separates the surface of $P$ into two pieces, one containing $D_1$, the
other containing $D_2$. 
Let $C$ be an arbitrary such curve (Fig.~\ref{fig:parcel}(b)). We
call $C$ a separating cycle. The net $N_1$, obtained by cutting $P$ along $D_1$,
contains an inscribed closed region $T$ whose boundary is $C$
(Fig.~\ref{fig:parcelnet}(a)). 
On the other hand, a net $N_2$ which is obtained by cutting $P$ along $D_2$
contains an inscribed conjugate region $T'$ whose boundary is the
opposite side of $C$ (Fig.~\ref{fig:parcelnet}(c)). Hence, a net $N_1$ has a trunk $T$ and a
conjugate trunk $T'$, and a net $N_2$ has a trunk $T'$ and a conjugate trunk
$T$. By Theorem~\ref{thm:reversible} this pair of $N_1$ and $N_2$ is reversible (Fig.~\ref{fig:parcelnet}(b)). \qed
\end{proof}

\begin{figure}[h]
\parbox[t]{0.49\textwidth}{\centering ~\\
     \includegraphics[scale=0.12]{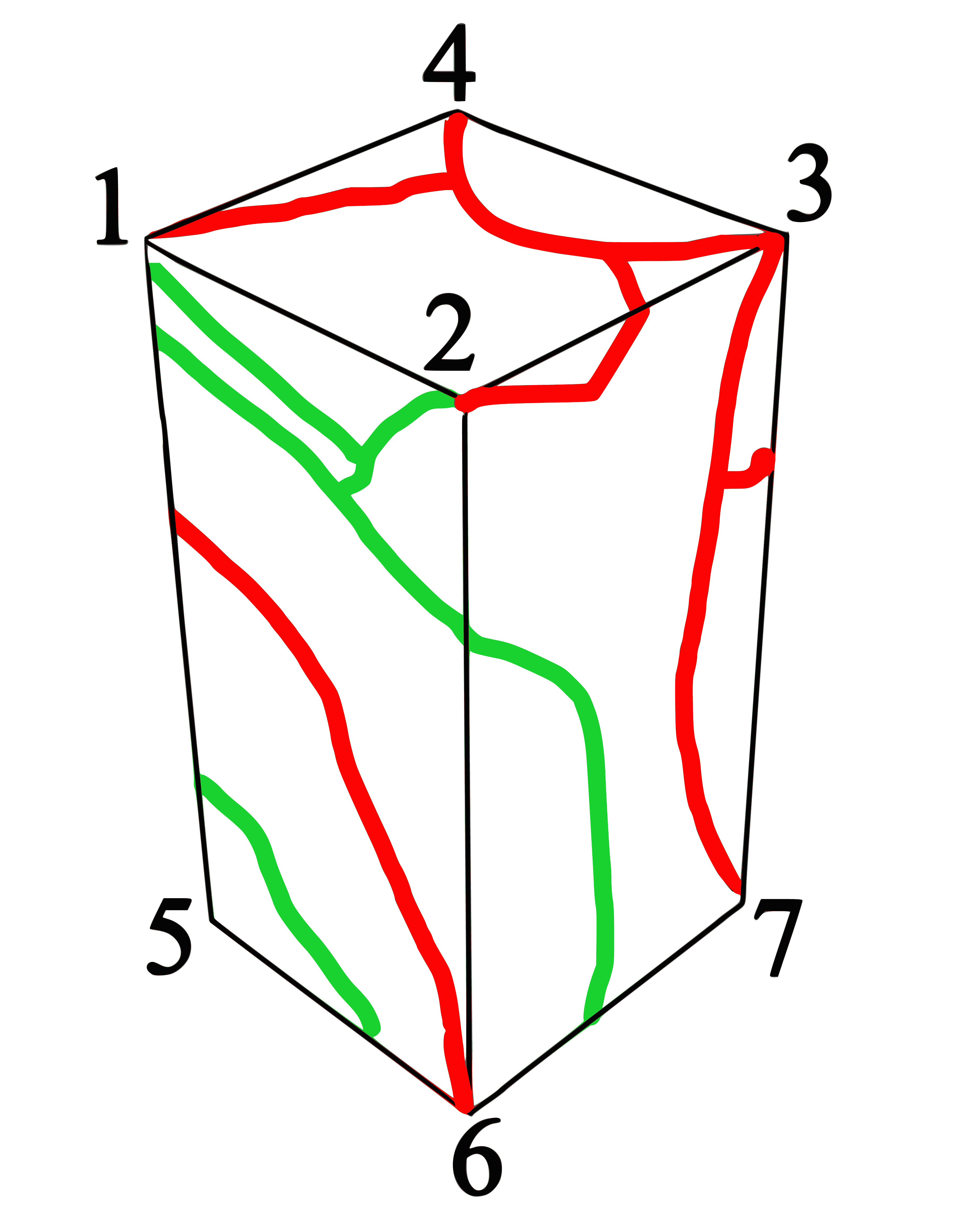} \\
    {(a)} }
\parbox[t]{0.49\textwidth}{\centering ~\\
    \includegraphics[scale=0.08]{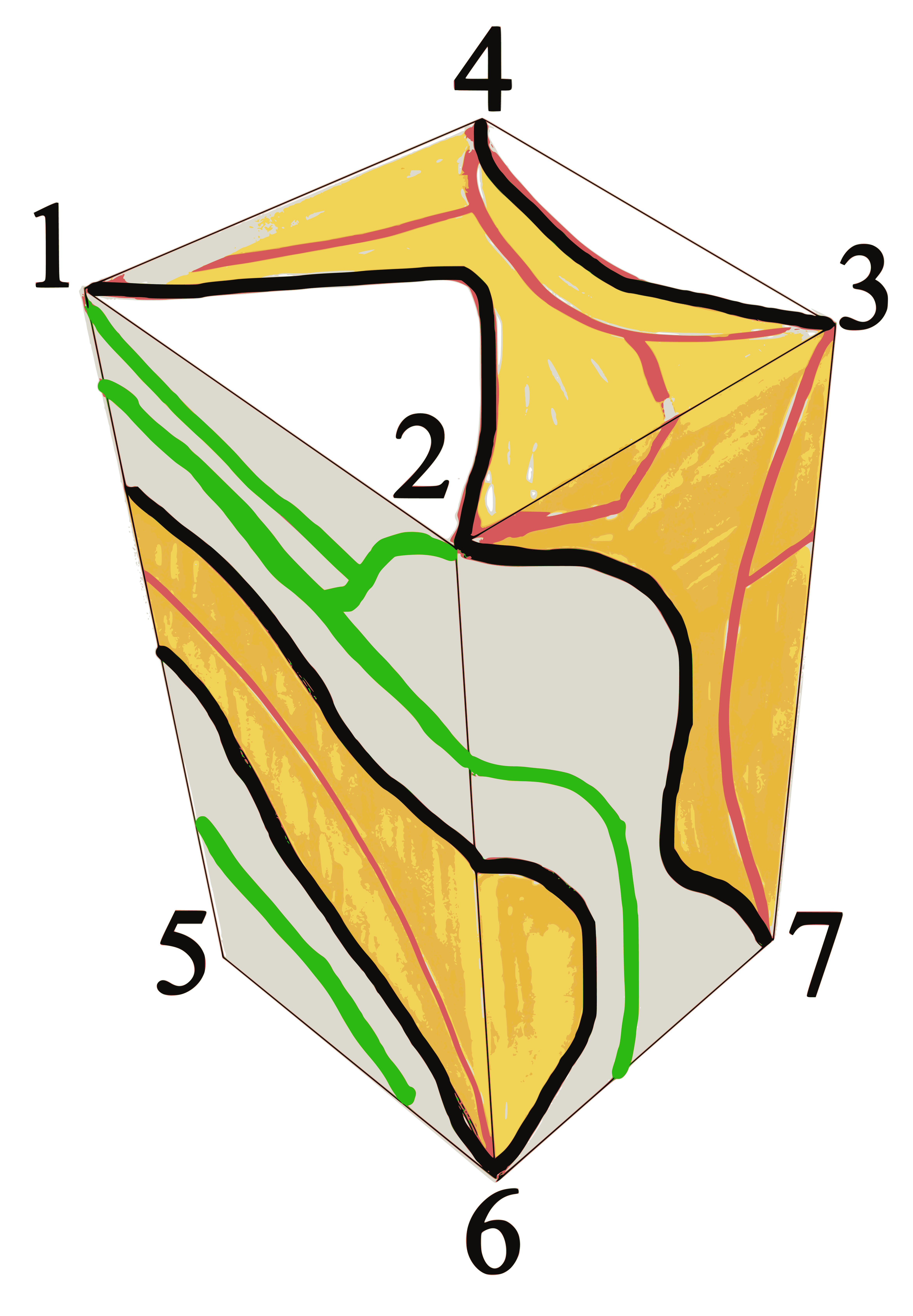}\\
    {(b)} }
\caption{A polyhedron $P$ with dissection trees $D_1$ (red tree) and $D_2$ (green tree),
a separating cycle $C$ (black cycle).}\label{fig:parcel}
\end{figure}

\begin{figure}[h]
\parbox[b]{0.25\textwidth}{\centering ~\\
     \includegraphics[scale=0.04]{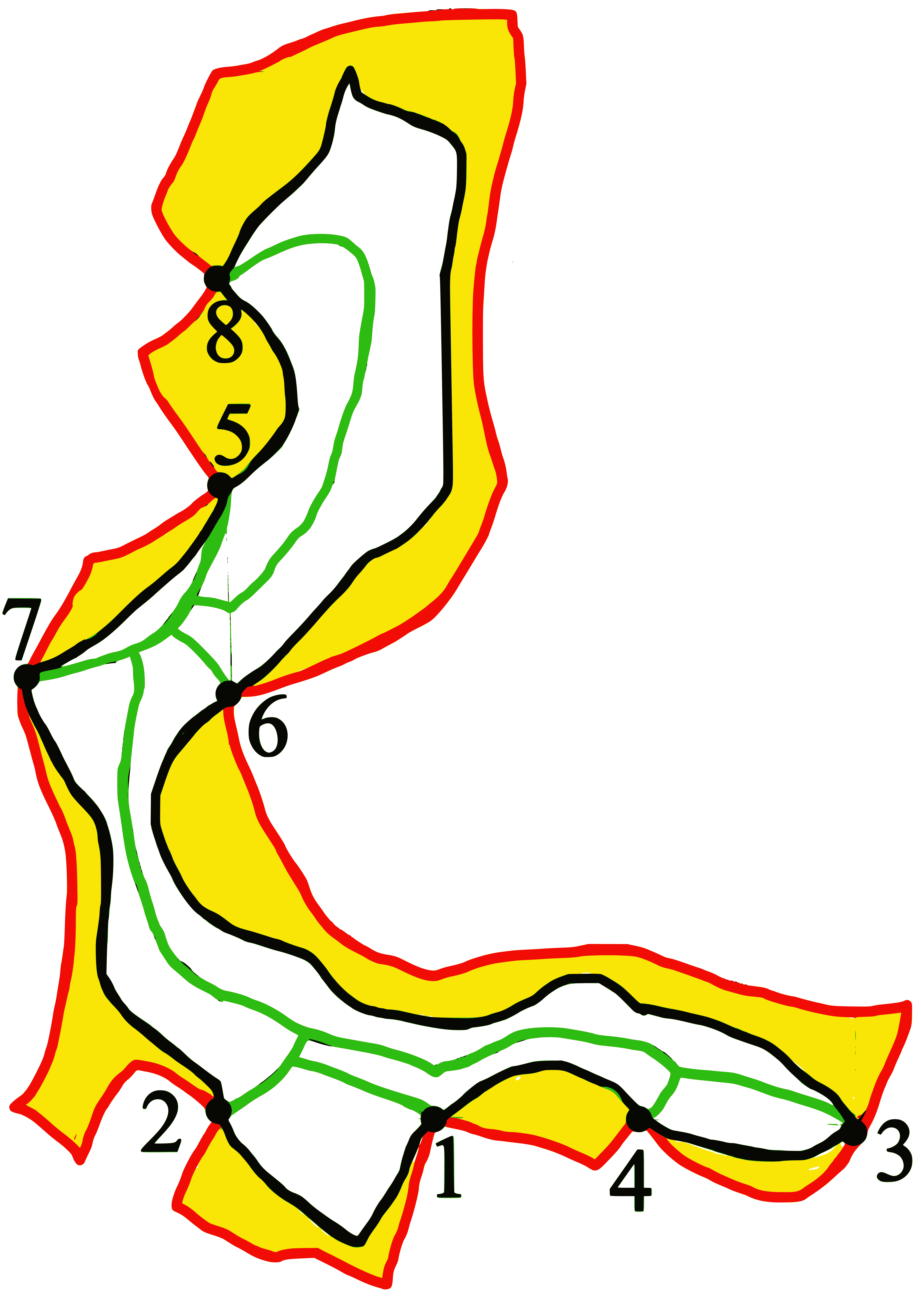} \\
    {(a)} }
\parbox[b]{0.37\textwidth}{\centering ~\\
    \includegraphics[scale=0.06]{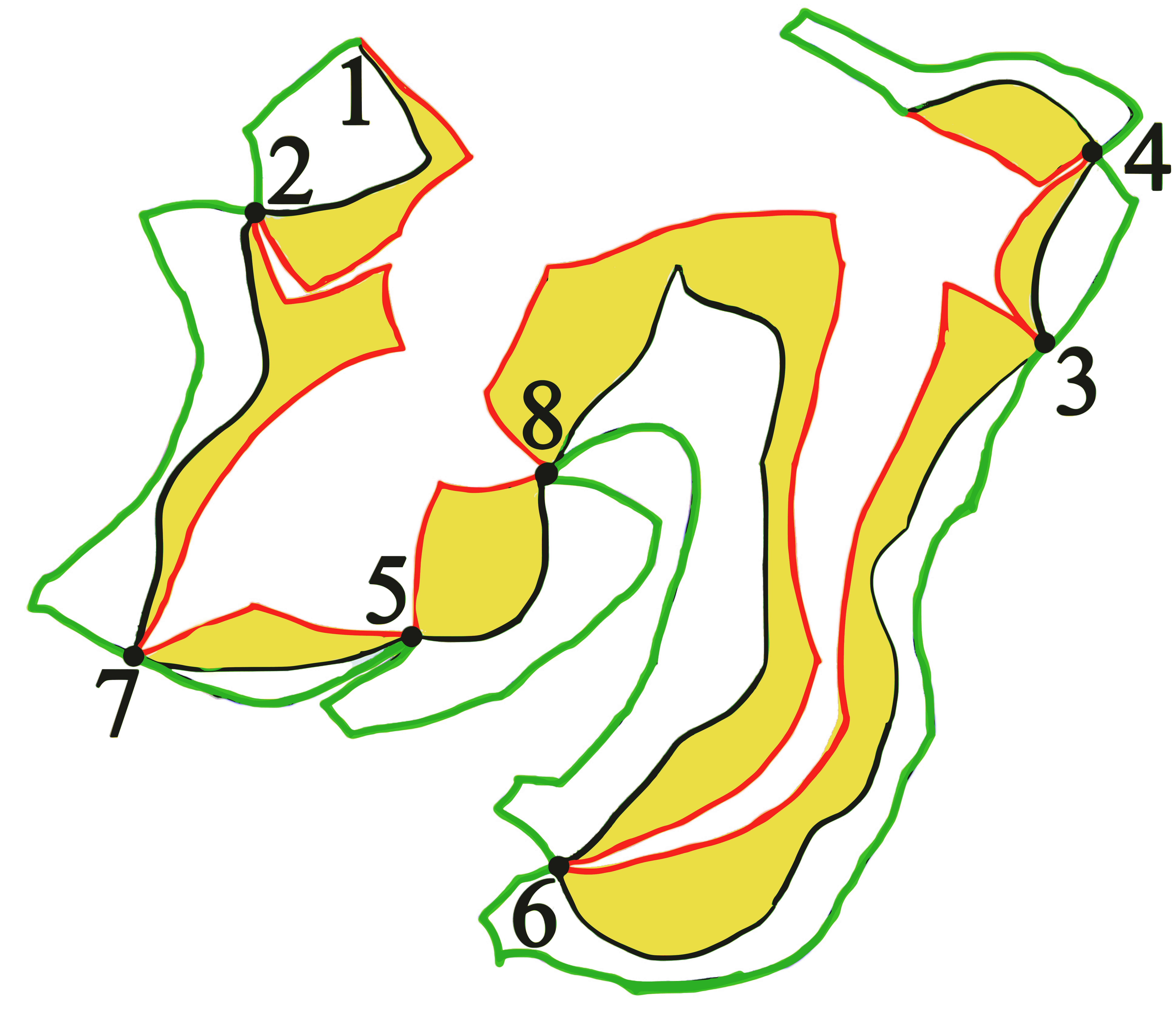}\\
    {(b)} }
\parbox[b]{0.31\textwidth}{\centering ~\\
     \includegraphics[scale=0.04]{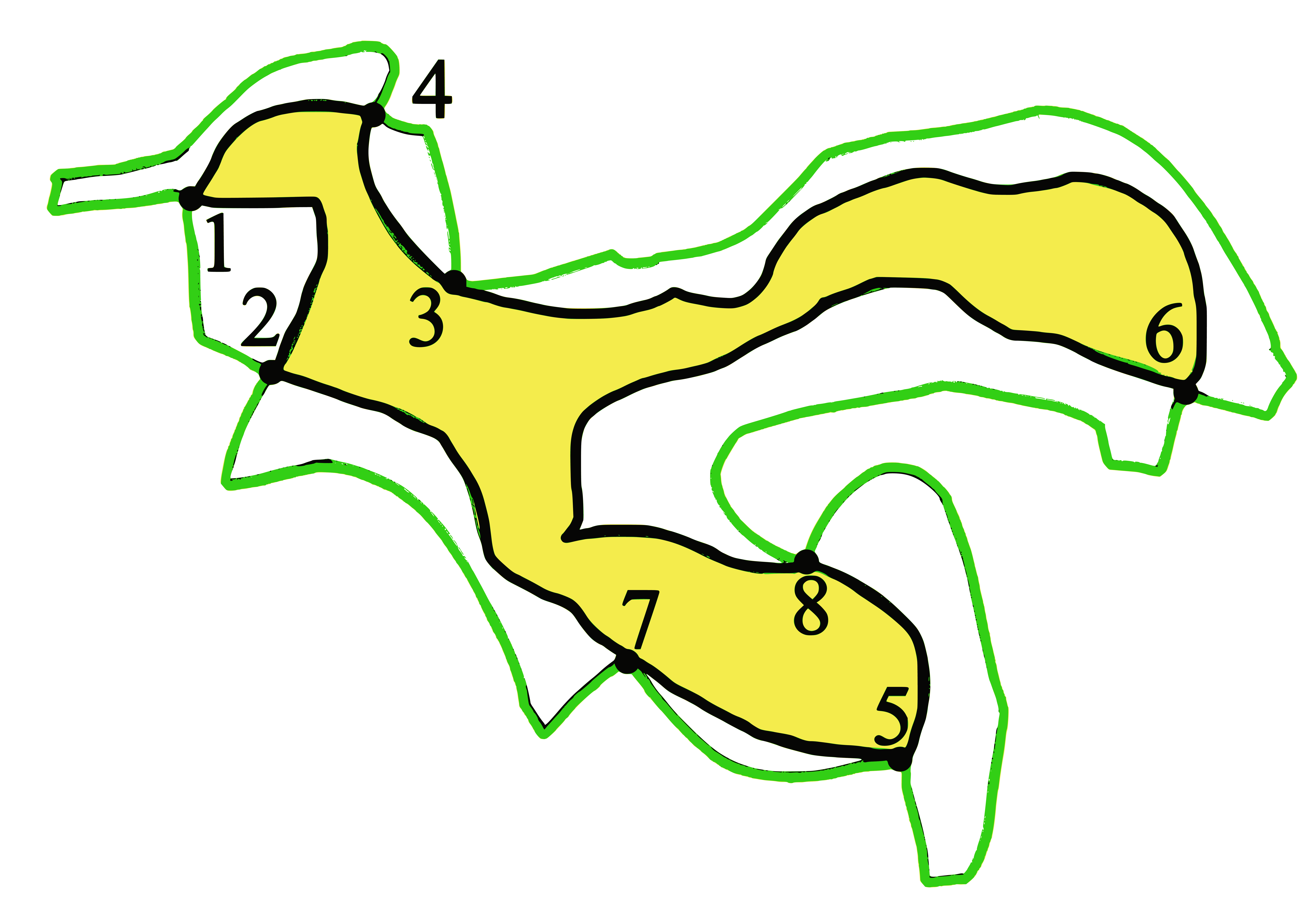} \\
    {(c)}}
\caption{Nets $N_1$ and $N_2$ obtained by cutting the surface of $P$
  along $D_1$ and $D_2$, respectively.}\label{fig:parcelnet}
\end{figure}

\begin{theorem}\label{thm:manynets}
For any net $N_1$ of a polyhedron $P$ with $n$ vertices, there exist infinitely many nets $N_2$ of $P$ such that $N_1$ is reversible to $N_2$.
\end{theorem}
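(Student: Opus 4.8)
The plan is to deduce Theorem~\ref{thm:manynets} from Theorem~\ref{thm:reversible2} by exhibiting infinitely many dissection trees $D_2$ on the surface of $P$ that do not properly cross a fixed dissection tree $D_1$ inducing the given net $N_1$. First I would fix a dissection tree $D_1$ of $P$ whose associated net is $N_1$ (such a $D_1$ exists by the definition of a net). I then want to produce a separating cycle $C$, i.e.\ a closed Jordan curve on the surface of $P$ disjoint from $D_1$, which is possible because $D_1$ is a tree and hence its complement on the closed surface is connected and non-empty; a small loop encircling $D_1$ (the boundary of a thin regular neighborhood of $D_1$) works. The complementary region $R = \partial P \setminus (\text{open neighborhood of } D_1)$ is then a single disk-like region (a closed surface minus an open disk is a disk for the sphere) containing no vertices of $P$ except none at all — wait, I must be careful: $R$ need not contain vertices, but any dissection tree $D_2$ must span all $n$ vertices of $P$, and all those vertices lie on $D_1$'s side. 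So instead I will choose $C$ to hug $D_1$ so closely that all $n$ vertices of $P$ lie in the closure of the region $R$ on the other side; equivalently, $C$ runs through the interior of the faces, staying arbitrarily close to $D_1$ but crossing no vertex, while the vertices remain on the $R$-side. This is the standard "thicken the tree slightly, but keep its endpoints (the vertices) outside" construction and needs a short justification.

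Next, inside the region $R$ — which is a topological disk containing all $n$ vertices of $P$ on or near its boundary — I would construct infinitely many distinct spanning trees $D_2$ of the vertex set, each drawn entirely inside $R$ and therefore disjoint from $D_1$ (in particular not properly crossing it). There are infinitely many such trees: even for a fixed combinatorial tree on the $n$ vertices, one can route its edges as distinct arcs inside the disk $R$ (e.g.\ by adding an extra "wiggle" or detour around a point of $R$), and distinct arc systems yield non-congruent nets $N_2$ because the cut locus, and hence the shape of the resulting net, changes. By Theorem~\ref{thm:reversible2}, since $D_1$ and each such $D_2$ do not properly cross, the pair $(N_1, N_2)$ is reversible. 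Thus $N_1$ is reversible to infinitely many nets $N_2$.

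The main obstacle I expect is the "infinitely many \emph{distinct}" part, not the existence of one $D_2$: I must argue that the construction genuinely yields infinitely many non-equivalent nets rather than the same net over and over. The cleanest way is probably to vary a continuous parameter — e.g.\ slide one edge of $D_2$ continuously, or vary the length along some face at which $D_2$ meets a vertex — and observe that the net $N_2$ varies continuously and non-trivially with that parameter, so that it takes infinitely many incongruent values (two nets being "the same" is a closed, lower-dimensional condition in this parameter, or one can simply track a single edge length of $N_2$ that depends strictly monotonically on the parameter). A secondary technical point is confirming that $C$ can indeed be chosen so that all vertices end up on the $D_1$-side while $C$ itself avoids all vertices and does not cross $D_1$; this follows from the fact that on the sphere a tree has a regular neighborhood that is a disk whose boundary is a Jordan curve, and one can shrink this neighborhood to exclude any finite set of points not on $D_1$ — but here the vertices \emph{are} on $D_1$, so I instead push $C$ to run just \emph{outside} each vertex, detouring into the incident faces, which is always possible since each vertex has positive angle deficit / a full neighborhood of faces around it. Once these two points are nailed down, the theorem follows immediately from Theorem~\ref{thm:reversible2}.
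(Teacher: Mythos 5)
Your overall strategy --- produce infinitely many dissection trees $D_2$ that do not properly cross $D_1$ and invoke Theorem~\ref{thm:reversible2} --- is the same as the paper's, but the central construction as you describe it cannot be carried out. A dissection tree must span all $n$ vertices of $P$, and all of these vertices lie on $D_1$; consequently $D_1$ and $D_2$ can never be disjoint: they necessarily share the vertices, and the hypothesis of Theorem~\ref{thm:reversible2} is only that they do not \emph{properly cross}. Your plan, however, asks for a Jordan curve $C$ with $C\cap D_1=\emptyset$, avoiding every vertex, such that all vertices lie on the $R$-side. This is impossible: $D_1$ is connected and disjoint from $C$, so $D_1$ --- and with it every vertex --- lies strictly inside the component on the $D_1$-side, and since no vertex lies on $C$, no vertex belongs to $\overline{R}=R\cup C$; ``pushing $C$ just outside each vertex into the incident faces'' only makes this exclusion explicit. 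Hence there is no spanning tree of the vertex set ``drawn entirely inside $R$ and therefore disjoint from $D_1$'', and the step that produces $D_2$ collapses.

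The repair is exactly what the paper does: cut $P$ open along $D_1$ to obtain the net $N_1$; each vertex $v_k$ of degree $k$ in $D_1$ then appears as several copies on the perimeter of $N_1$. Choose one representative copy $v_k^*$ for each $k$ and join the chosen copies by a spanning tree drawn through the interior of $N_1$, meeting the perimeter only at its endpoints $v_k^*$. Mapped back to the surface, this is a dissection tree $D_2$ that touches $D_1$ only at the vertices, approaching each vertex from a single corner, and therefore does not properly cross $D_1$ --- which is all that Theorem~\ref{thm:reversible2} requires (you need not construct the separating cycle yourself; that is done inside its proof, and the cycle may pass through the vertices). Since $N_1$ is connected, infinitely many such trees exist. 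Your closing concern about certifying that these give infinitely many genuinely distinct nets $N_2$ (varying a continuous parameter and tracking, say, an edge length of $N_2$) is a reasonable supplement --- the paper is terse on that point --- but it does not rescue the main construction as written.
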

 
\begin{proof}  
Any net $N$ of $P$ has a one-to-one correspondence with a dissection
tree $D$ on the surface of $P$. Let the dissection tree of $N_i$ be
$D_i$ ($i=1,2$), respectively (Fig.~\ref{fig:swirl}(a)). 
The perimeter of $N_i$ can be decomposed into several parts in which
each is congruent to an edge of $D_i$. 
Moreover, a vertex with degree $k$ on $D_i$ appears $k$ times on the
perimeter of $N_i$. These duplicated vertices of $v_i$ are labeled as
$v_i',v_i'',\ldots$. 

Choose an arbitrary vertex $v_k$ among $v_k,v_k',v_k'',\ldots$ on
$N_1$ as a representative and denote it by $v_k^*$, where $k = 1,
2,\ldots, n$.  
Since $N_1$ is connected, it is possible to
draw infinitely many arbitrary spanning trees $D_2$, each of which connects $v_k^*$ ($k = 1,
2,\ldots, n$) inside $N_1$ (Fig.~\ref{fig:swirl}(b)). 
Then, any such $D_2$ doesn't intersect $D_1$. 
(Fig.~\ref{fig:swirl}(c)). As in
Theorem~\ref{thm:reversible2}, dissect $N_1$ along $D_2$ into $n$ 
pieces $P_1,\ldots,P_n$ , and then connect them in sequence using
$n-1$ hinges on the perimeter of $N_1$ to form a chain. 
Fix one of the end-pieces of the chain and rotate the
remaining pieces then forming net $N_2$ which is obtained by cutting $P$
along $D_2$ (Fig.~\ref{fig:swirl} (d)).  \qed
\end{proof}

\begin{figure}[h]
\parbox[t]{\textwidth}{
  \noindent
  \parbox[t]{3.5mm}{\vspace{0mm}(a)}
  \parbox[t]{0.26\textwidth}{\vspace{0mm}\includegraphics[width=\hsize]{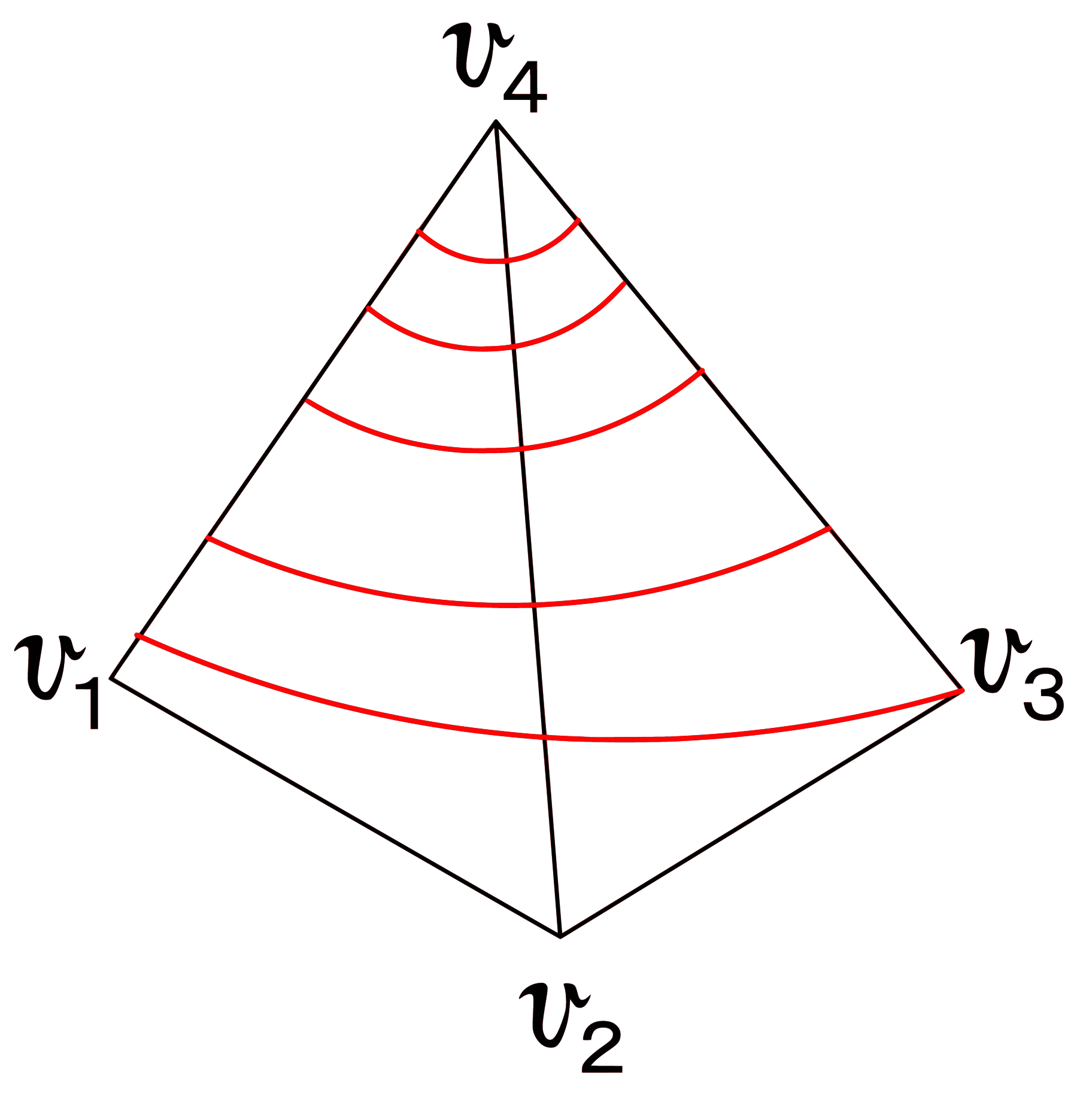}}
  \parbox[t]{0.26\textwidth}{\vspace{0mm}\includegraphics[width=\hsize]{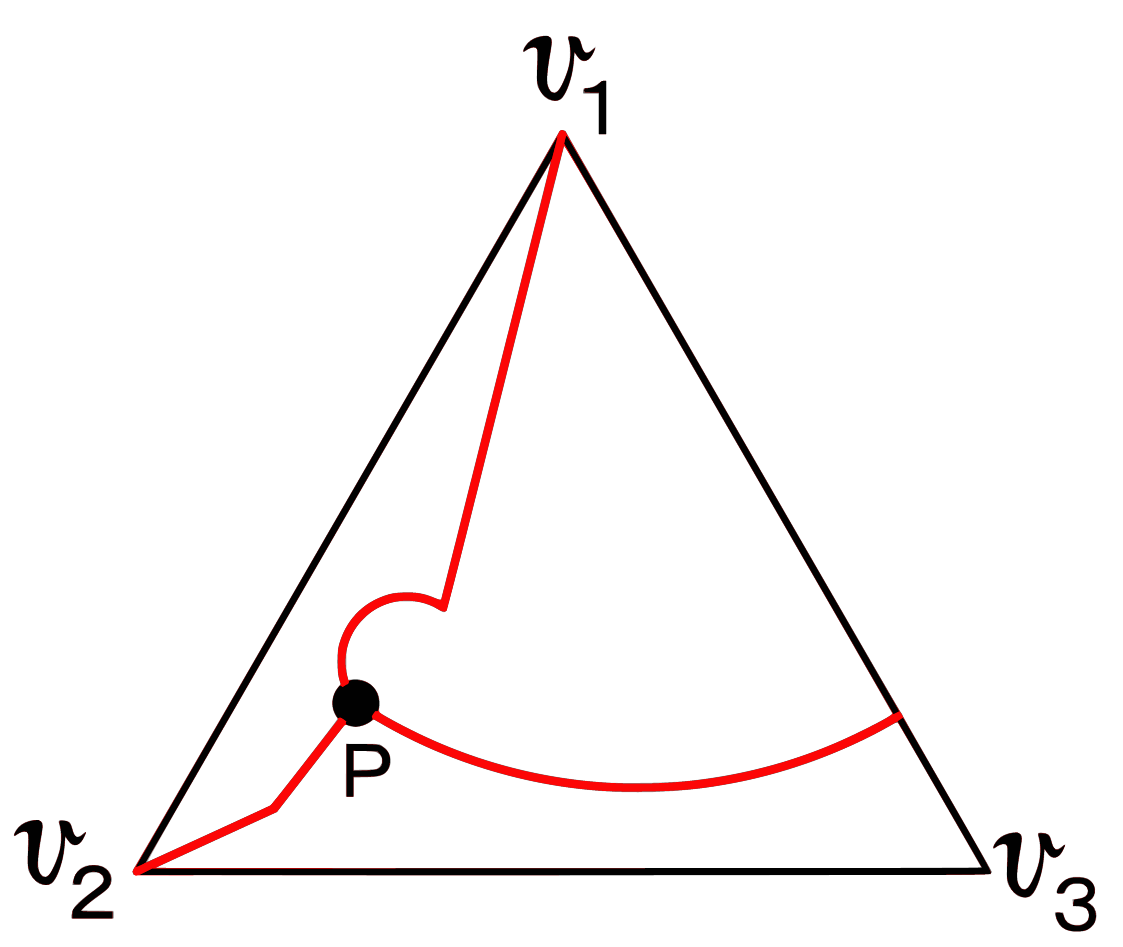}}
  \parbox[t]{0.47\textwidth}{\vspace{0mm}\includegraphics[width=\hsize]{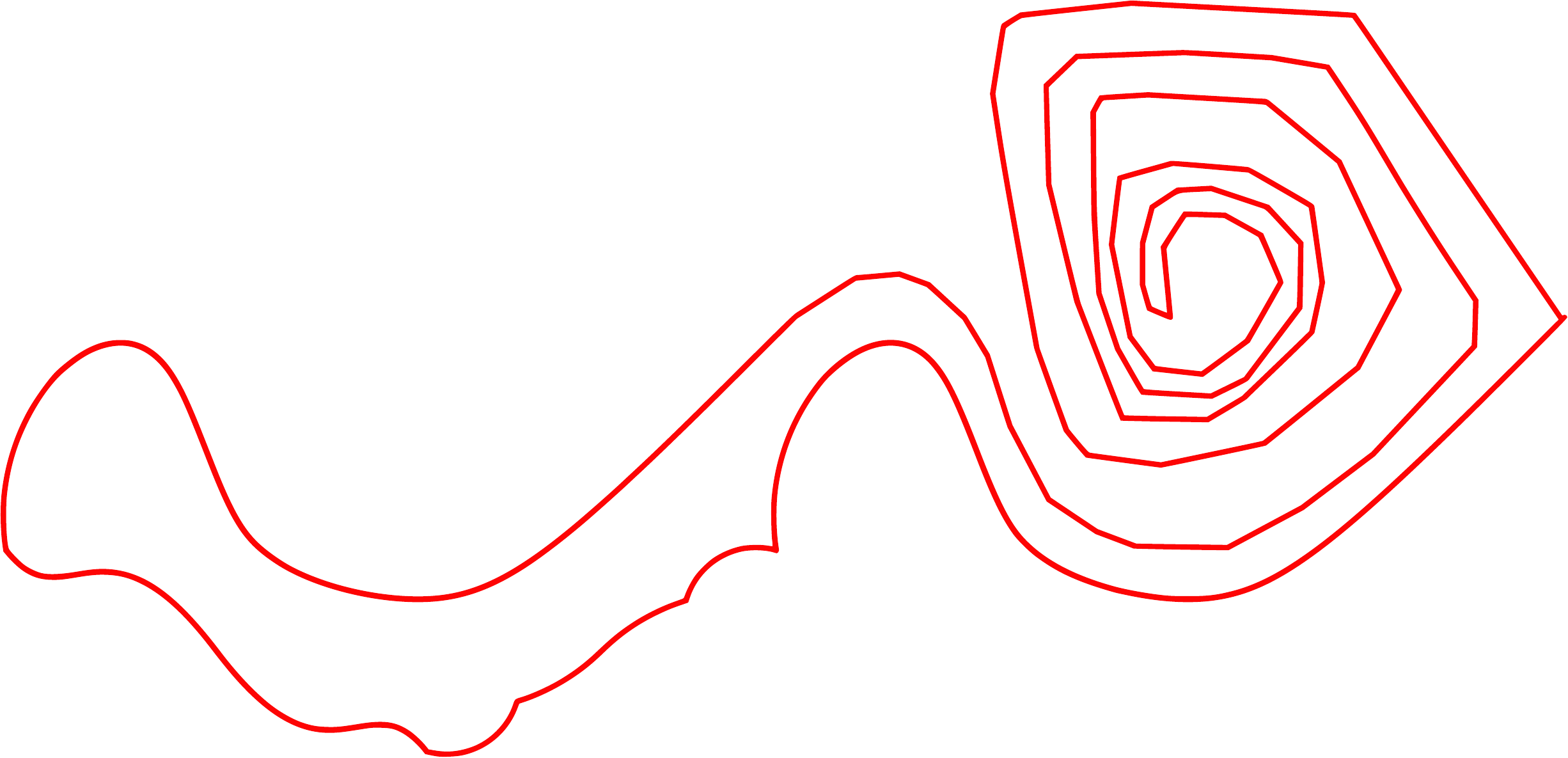}}\\
\begin{picture}(0,0)
\put(20,70){\mbox{$D_1$:}}
\put(200,70){\mbox{$N_1$:}}
\put(100,50){\mbox{$+$}}
\put(180,50){\mbox{$\Rightarrow$}}
\end{picture}
  \mbox{\parbox[t]{3.5mm}{\vspace{0mm}(b)}
  \parbox[t]{0.46\textwidth}{\vspace{0mm}\includegraphics[width=\hsize]{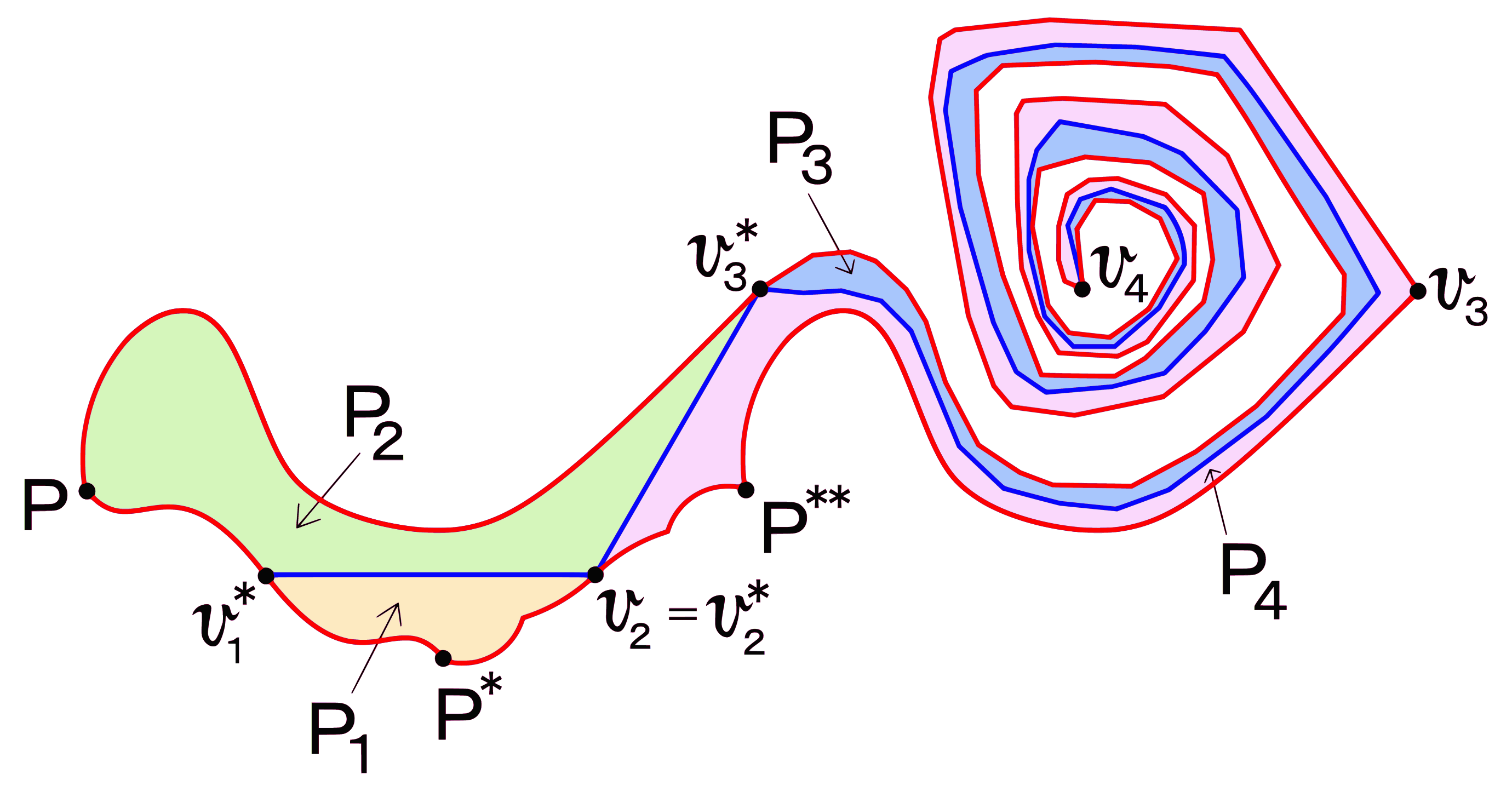}}
  \parbox[t]{0.26\textwidth}{\vspace{0mm}\includegraphics[width=\hsize]{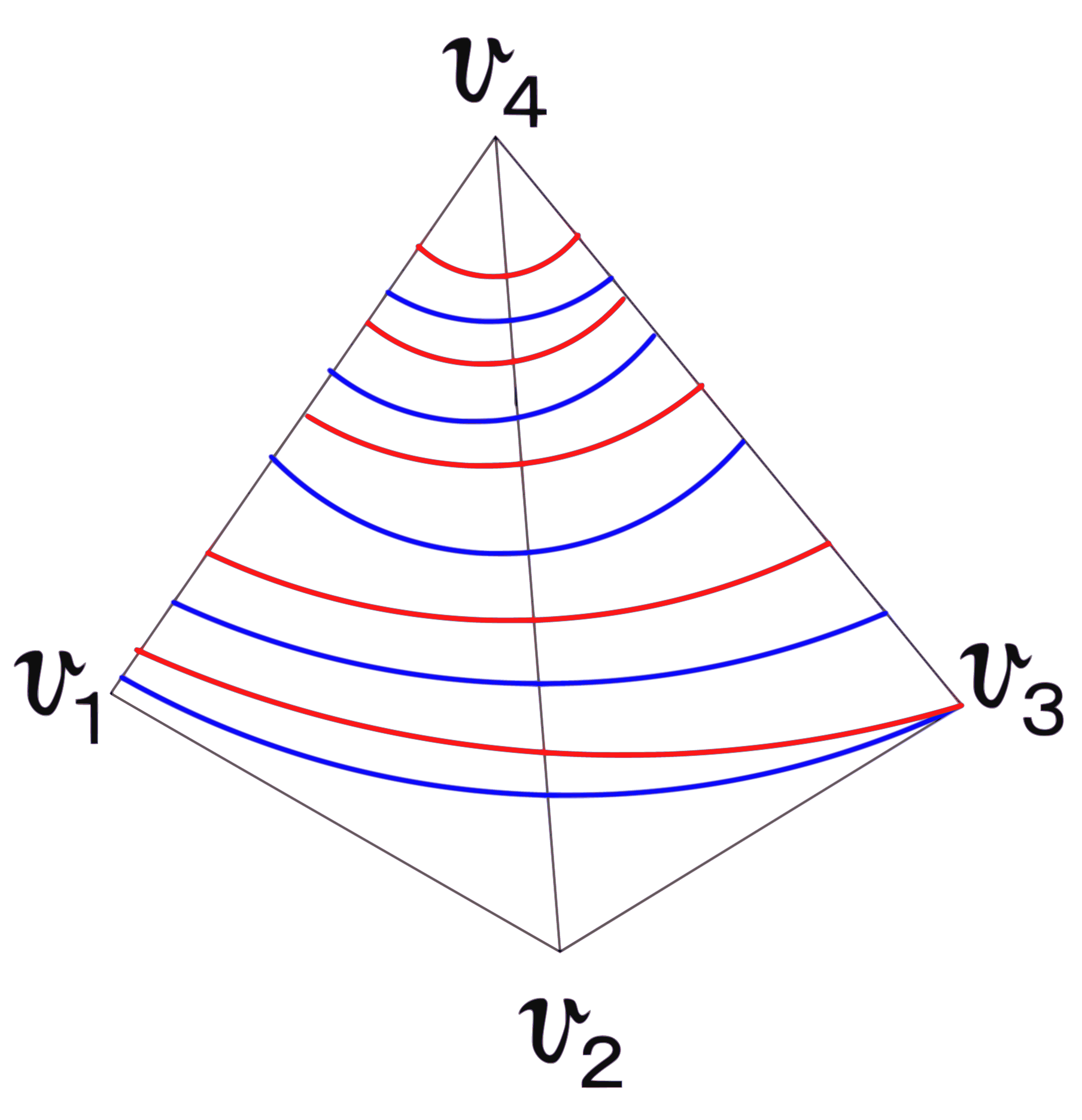}}
  \parbox[t]{0.26\textwidth}{\vspace{0mm}\includegraphics[width=\hsize]{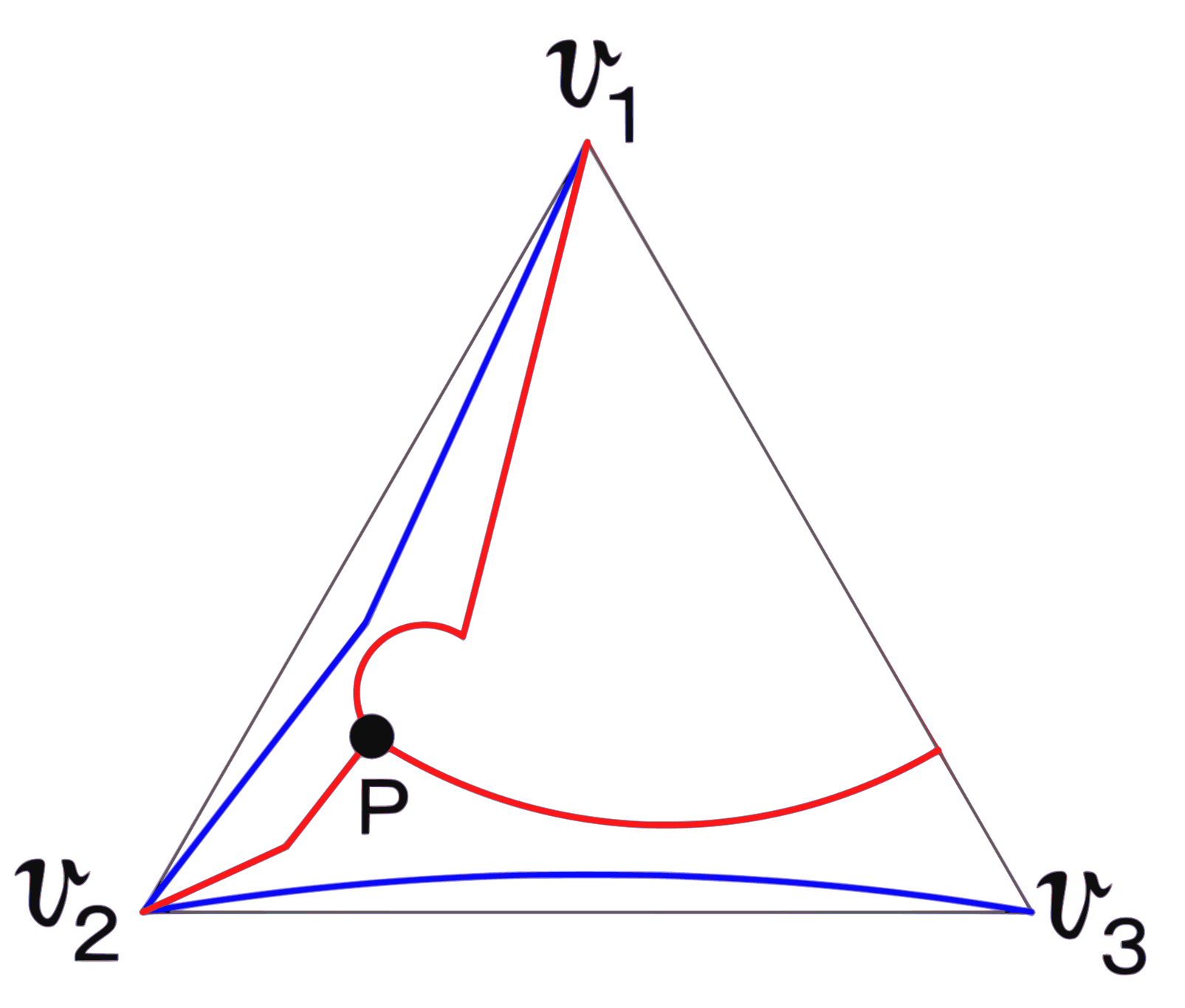}}}
  \\
  \parbox[t]{3.5mm}{\vspace{0mm}(c)}
  \parbox[t]{0.37\textwidth}{\vspace{0mm}\includegraphics[width=\hsize]{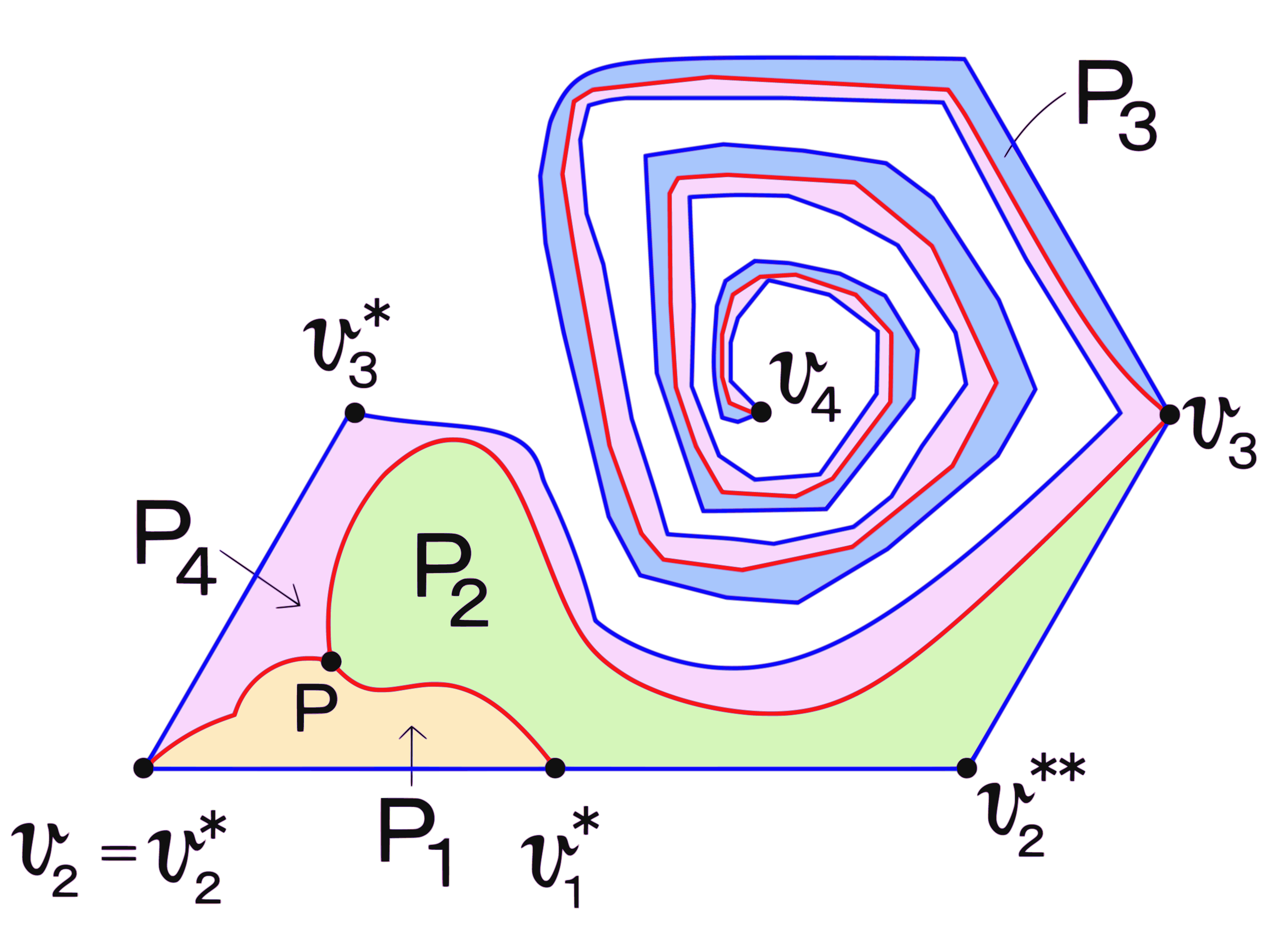}}
}
\caption{A swirl net of a regular tetrahedron.}\label{fig:swirl}
\end{figure}

\begin{corollary}[Envelope magic \cite{r7}]  
Let $E$ be an arbitrary doubly covered polygon (dihedron) and let
$D_1$ and $D_2$,  be dissection trees of $E$. If dissection tree $D_1$
doesn't properly cross dissection tree $D_2$,
then a pair of nets $N_1$ and $N_2$ obtained by cutting the surface of
$E$ along $D_1$ and $D_2$ is reversible (Fig.~\ref{fig:lobster}).  
\end{corollary}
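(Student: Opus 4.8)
The plan is to recognize the corollary as a direct specialization of Theorem~\ref{thm:reversible2}. A doubly covered polygon (dihedron) $E$ is nothing but a degenerate convex polyhedron: its surface consists of two congruent copies of a convex $n$-gon glued along their common boundary, so it is topologically a $2$-sphere, and its vertex set coincides with the vertex set of the underlying polygon. Under this identification, the trees $D_1$ and $D_2$ are dissection trees of $E$ in the sense of Section~\ref{sec:Introduction} (trees drawn on the surface that span all vertices), the nets $N_1,N_2$ are obtained exactly as in Theorem~\ref{thm:reversible2}, and the hypothesis that $D_1$ does not properly cross $D_2$ is literally the hypothesis of that theorem. Hence I would simply invoke Theorem~\ref{thm:reversible2} to conclude that $N_1$ and $N_2$ are reversible, with a double chain composed of $n$ pieces.

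The only point that merits an explicit sentence is that the argument behind Theorem~\ref{thm:reversible2} uses only the topology of the surface and never the non-degeneracy of the polyhedron: since $D_1$ and $D_2$ do not properly cross on a surface homeomorphic to a sphere, there is a separating Jordan curve $C$ with $D_1$ on one side and $D_2$ on the other. Cutting $E$ along $D_1$ then yields a net $N_1$ containing an inscribed region $T$ bounded by $C$, and cutting along $D_2$ yields a net $N_2$ containing the conjugate region $T'$ bounded by the opposite side of $C$. Thus $N_1$ has trunk $T$ and conjugate trunk $T'$, and $N_2$ has trunk $T'$ and conjugate trunk $T$, so Theorem~\ref{thm:reversible} applies and reversibility follows.

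I do not expect any genuine obstacle here, since the corollary is a restriction rather than an extension of Theorem~\ref{thm:reversible2}; the mild thing to keep track of is that, because the faces of $E$ are flat, some pieces $P_i$ of the double chain may be \emph{empty} (consisting of a single perimeter segment $e_i$). This is already permitted by the definitions of trunk and of double chain, where one simply distinguishes the two sides of such an edge, so it poses no difficulty. Finally, I would close by noting that the familiar ``envelope magic'' — turning a flat paper envelope inside out along its creases — is precisely this corollary applied to a suitable doubly covered polygon, as depicted in Fig.~\ref{fig:lobster}.
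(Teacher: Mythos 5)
Your proposal is correct and matches the paper's intent exactly: the corollary is stated without separate proof precisely because it is the specialization of Theorem~\ref{thm:reversible2} to a doubly covered polygon viewed as a (degenerate) polyhedron, which is how you argue it. Your added remarks — that the separating-cycle argument only uses the topology of the sphere and that flat faces may produce empty pieces, which the double-chain definition already accommodates — are accurate and harmless elaborations rather than a different route.
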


\begin{figure}[h]
\parbox[t]{\hsize}{
  \noindent\mbox{
  \parbox[t]{0.46\textwidth}{\vspace{0mm}\includegraphics[width=\hsize]{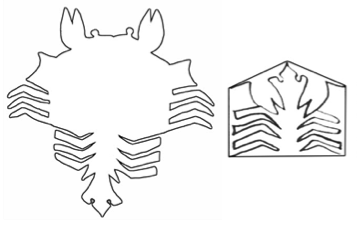}}
  \parbox[t]{0.52\textwidth}{\vspace{0mm}
   \mbox{
    \parbox[t]{0.33\hsize}{\vspace{8mm}\includegraphics[width=\hsize]{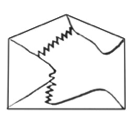}}
    \parbox[t]{0.66\hsize}{\vspace{0mm}\includegraphics[width=\hsize]{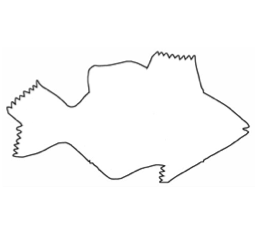}}
    }
    }}
   \mbox{
  \parbox[t]{0.28\textwidth}{\vspace{10mm}\centering\includegraphics[width=0.8\hsize]{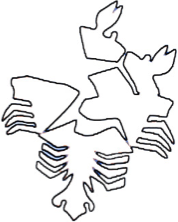}}
  \hspace{6mm}
  \parbox[t]{0.25\textwidth}{\vspace{0mm}\centering\includegraphics[width=0.8\hsize]{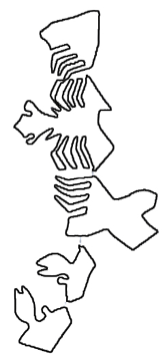}}
  \hspace{-6mm}
  \parbox[t]{0.46\textwidth}{\vspace{10mm}\centering\includegraphics[width=0.8\hsize]{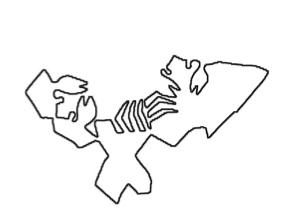}}
}
\begin{picture}(0,0)
\put(30,260){\mbox{$N_1$: lobster}}
\put(270,260){\mbox{$N_2$: fish}}
\put(120,250){\mbox{Pentagonal dihedra $E$}}
\put(170,245){\vector(1,-1){15}}
\put(160,245){\vector(-1,-1){15}}
\put(110,230){\mbox{$D_1$}}
\put(205,230){\mbox{$D_2$}}
\put(90,175){\mbox{\scalebox{1.8}{$\longleftarrow$}}}
\put(75,165){\mbox{open and turn over}}
\put(190,165){\mbox{open and turn over}}
\put(210,175){\mbox{\scalebox{1.8}{$\longrightarrow$}}}
\put(50,155){\mbox{\rotatebox{-90}{\scalebox{1.8}{$\Longrightarrow$}}}}
\put(270,155){\mbox{\rotatebox{-90}{\scalebox{1.8}{$\Longrightarrow$}}}}
\put(105,60){\mbox{\scalebox{1.8}{$\Longleftrightarrow$}}}
\put(200,60){\mbox{{\scalebox{1.8}{$\Longleftrightarrow$}}}}
\end{picture}
}
\caption{A lobster transforms into a fish ; The separating cycle C is the hem of a pentagonal dihedron.}\label{fig:lobster}
\end{figure}

The previous two theorems show that it is always possible to dissect
any polyhedron $P$ into two nets that are reversible, however, as
mentioned in the beginning of this section, those nets may sometimes
self-overlap when embedded in the plane. One may then ask whether a
convex polyhedron $P$ always has a pair of reversible non self-overlapping
nets. The following theorem answers in the positive. 

\begin{theorem}  
For any convex polyhedron $P$, there exists an infinity of pairs of non
self-overlapping nets of $P$ that are reversible.
\end{theorem}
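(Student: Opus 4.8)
\emph{Proof strategy.} The plan is to produce the reversible pair from the two classical non‑overlapping unfoldings of a convex polyhedral surface — the \emph{source unfolding} and the \emph{star unfolding}, both taken from the \emph{same} generic point — and then to check that their cut trees meet the non‑crossing hypothesis of Theorem~\ref{thm:reversible2}. So the construction itself is forced; the work is in verifying the hypotheses.

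First I would fix a point $x$ on the surface of $P$ that is generic, meaning that $x$ is not a vertex and the shortest path from $x$ to each vertex $v_i$ is unique; the non‑generic points form a set of measure zero. Let $D_1$ be the cut locus (ridge tree) of $x$. It is classical that $D_1$ is a tree and that it contains every vertex of $P$: a vertex has positive curvature, so arbitrarily near it there are points reached by two distinct shortest paths from $x$ (one around each side of the vertex), forcing the vertex into the closed set $C(x)$. Hence $D_1$ is a legitimate dissection tree, and cutting $P$ along $D_1$ yields the source unfolding $N_1$, which for a convex polyhedron is non‑self‑overlapping — indeed star‑shaped about the single image of $x$. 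Let $D_2$ be the union of the shortest paths from $x$ to $v_1,\dots,v_n$. Distinct shortest paths issuing from one point do not cross, so $D_2$ is a tree with node set $\{x,v_1,\dots,v_n\}$, and since a shortest path on a convex polyhedron never passes through a vertex, each $v_i$ is a leaf of $D_2$; thus $D_2$ is also a dissection tree, and cutting $P$ along it yields the star unfolding $N_2$, which is non‑self‑overlapping for convex $P$ by the theorem of Aronov and O'Rourke.

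Next I would verify that $D_1$ and $D_2$ do not properly cross. The relative interior of each shortest path $\gamma$ from $x$ to a vertex $v_i$ is disjoint from $C(x)$: if $\gamma$ met $C(x)$ at an interior point $y$, there would be a second shortest path from $x$ to $y$, and concatenating it with the remaining subarc of $\gamma$ would give a second shortest path from $x$ to $v_i$, contradicting genericity. Also $x\notin C(x)$. Therefore $D_1\cap D_2$ is exactly $\{v_1,\dots,v_n\}$, and at each such vertex $D_2$ has degree one, so the two trees merely touch there and never cross transversally. By Theorem~\ref{thm:reversible2}, $N_1$ and $N_2$ are reversible, and both are non‑self‑overlapping. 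To obtain infinitely many such pairs, I would let $x$ vary over a small open set of generic points on which the combinatorial types of $D_1$ and $D_2$ are constant; on such a set the pair $(N_1,N_2)$ depends continuously and non‑trivially on $x$ (for instance, some edge length of $N_1$ varies), so a continuum of choices of $x$ gives pairwise non‑congruent reversible pairs.

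The step I expect to be the real obstacle is not any one calculation but the careful assembly of the two non‑overlap inputs into the framework of this paper: one must confirm that the cut locus genuinely spans \emph{all} vertices so that it qualifies as a dissection tree, treat the measure‑zero set of non‑generic base points $x$ correctly, and argue the absence of a proper crossing cleanly at the vertices — which, as shown above, are the only points the two cut trees share. Once these are in place, Theorem~\ref{thm:reversible2} does the rest.
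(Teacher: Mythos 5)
Your proof is correct and follows essentially the same route as the paper: take $D_1$ to be the cut locus of a generic non-vertex point and $D_2$ the shortest paths from that point to all vertices, use the known non-overlap of the source and star unfoldings, observe the two trees do not properly cross, and invoke Theorem~\ref{thm:reversible2}. Your genericity condition (unique shortest paths to vertices) matches the paper's choice of $s$ off the cut loci of the vertices, and you merely spell out the infinitude argument (varying the base point) a bit more explicitly than the paper does.
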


\begin{proof}  
Choose an arbitrary point $s$ on the surface of $P$, but not on a
vertex. The \emph{cut locus} of $s$ is the set of all points $t$ on
the surface of $P$ such that the shortest path from $s$ to $t$ is not
unique. It is well known that the cut locus of $s$ is a tree that
spans all vertices of $P$. Cutting $P$ along the cut locus produces
the \emph{source unfolding}, which does not overlap~\cite{r10}. 
Let $D_1$ be the cut locus from $s$, and $N_1$ the corresponding non
self-overlapping net. The net $N_1$ is a star-shaped polygon, and the
shortest path from $s$ to any point $t$ in $P$ unfolds to a straight
line segment contained in $N_1$. 
The dissection tree $D_2$ is constructed by cutting $P$ along the
shortest path from $s$ to every vertex of $P$. The net $N_2$ thus
produced is a \emph{star unfolding} and also does not
overlap~\cite{r8}. Note also that the shortest path from $s$ to any
vertex of $P$, when cutting the source tree $D_1$, unfolds to a
straight line segment from $s$ to the corresponding vertex on
$N_1$. Therefore $D_1$ and $D_2$ do not properly intersect (In fact
$D_1$ and $D_2$ may coincide but not properly cross. In order to avoid
this, it suffices to choose $s$ not on the cut locus of any vertex of
$P$.)  
By Theorem~\ref{thm:reversible2}, $N_1$ and $N_2$ are reversible.    \qed
\end{proof}

\section{Reversibility and Tessellability for Nets of An
  Isotetrahedron}
A tetrahedron $T$ is called an \emph{isotetrahedron} if all faces of
$T$ are congruent. Note that there are infinitely many non-similar
isotetrahedra. Every net of an isotetrahedron tiles the
plane~\cite{r2}. Moreover, all nets of isotetrahedron can be
topologically classified into five types~\cite{r3}. 
By Theorem~\ref{thm:reversible2} and Theorem~\ref{thm:manynets}, the
following theorem is obtained:

\begin{figure}[h]
\parbox[t]{\hsize}{
  \noindent
  \parbox[t]{0.21\textwidth}{\vspace{15mm}\includegraphics[width=\hsize]{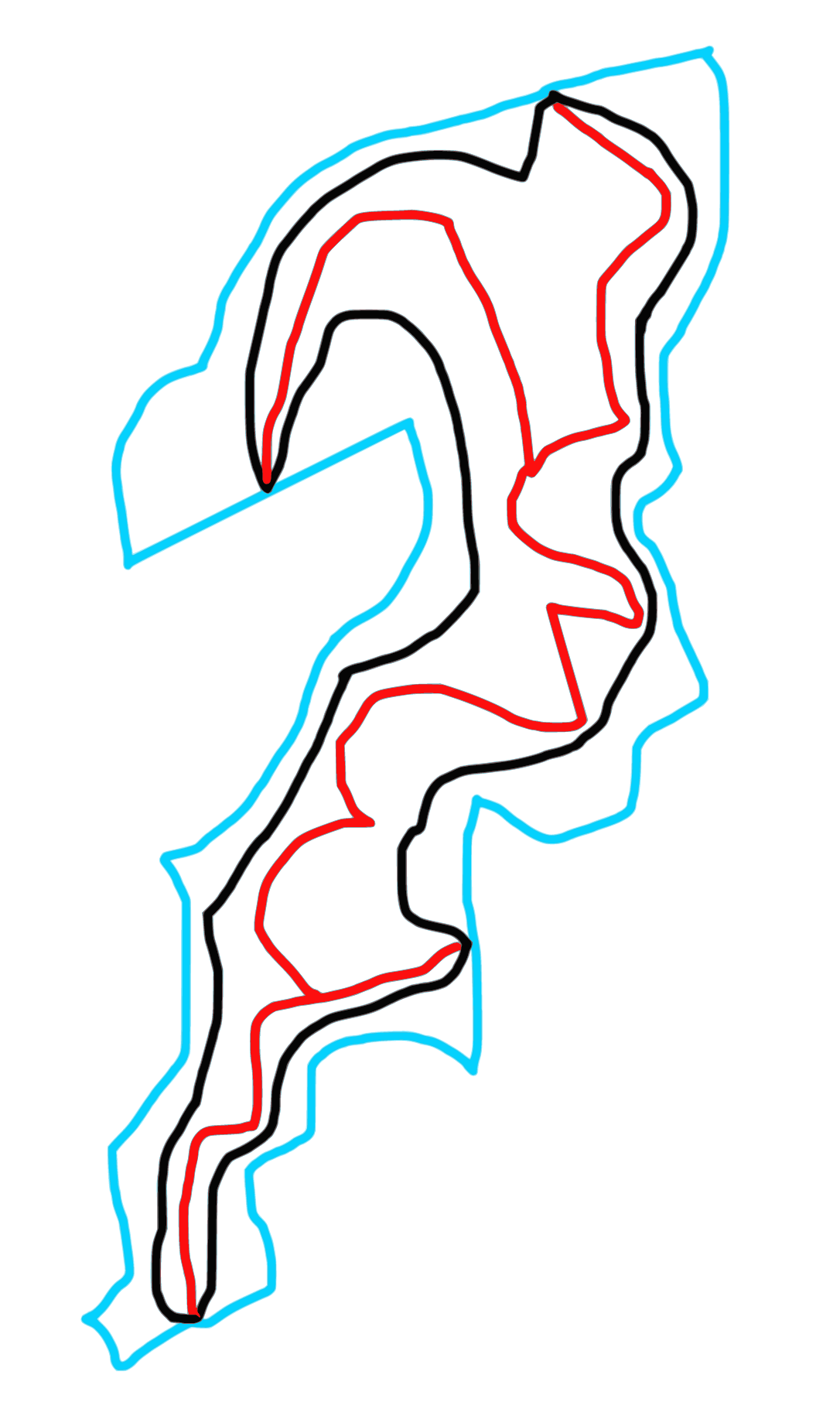}}
  \parbox[t]{0.38\textwidth}{\vspace{0mm}\centering
      \includegraphics[width=0.7\hsize]{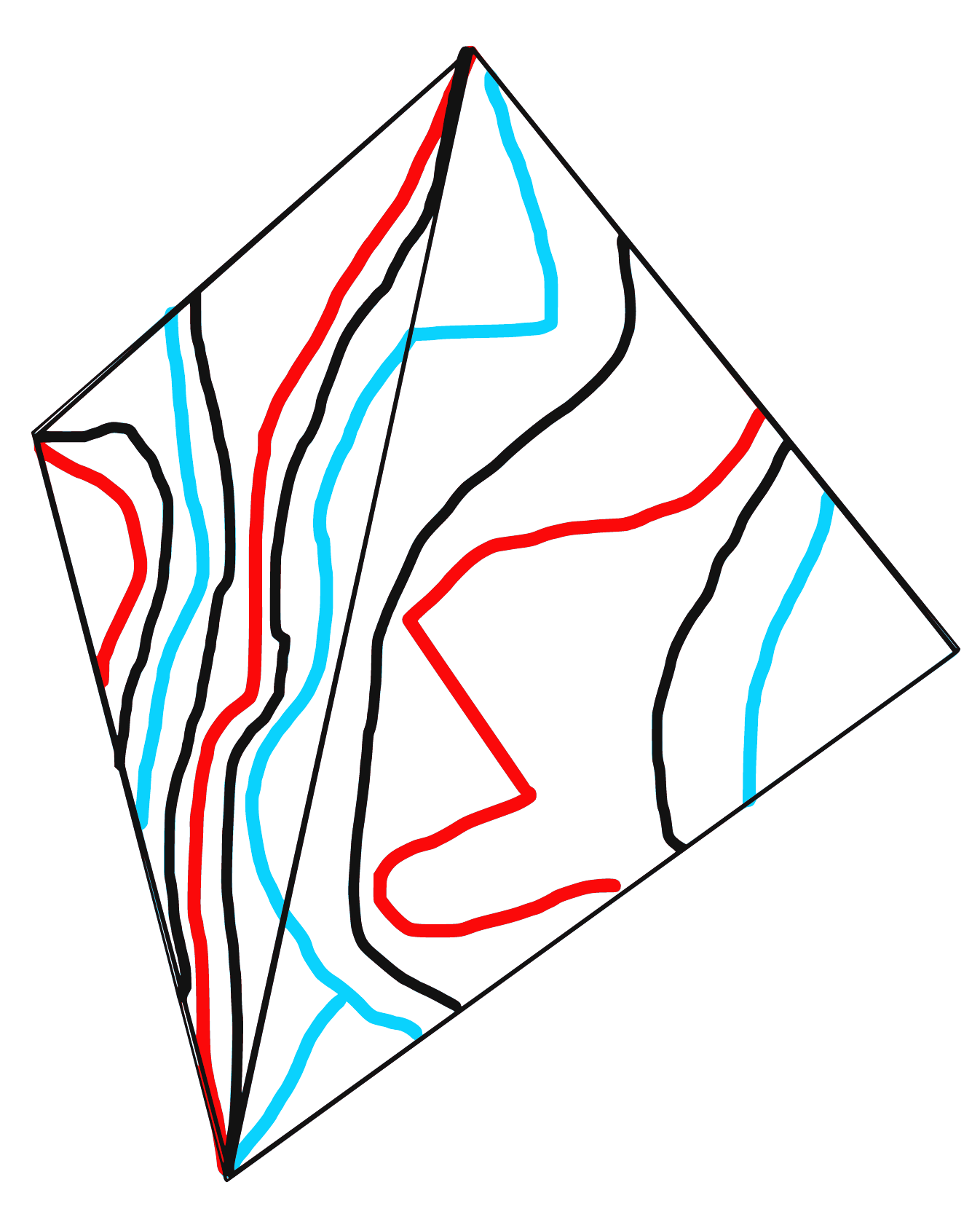}
      \includegraphics[width=\hsize]{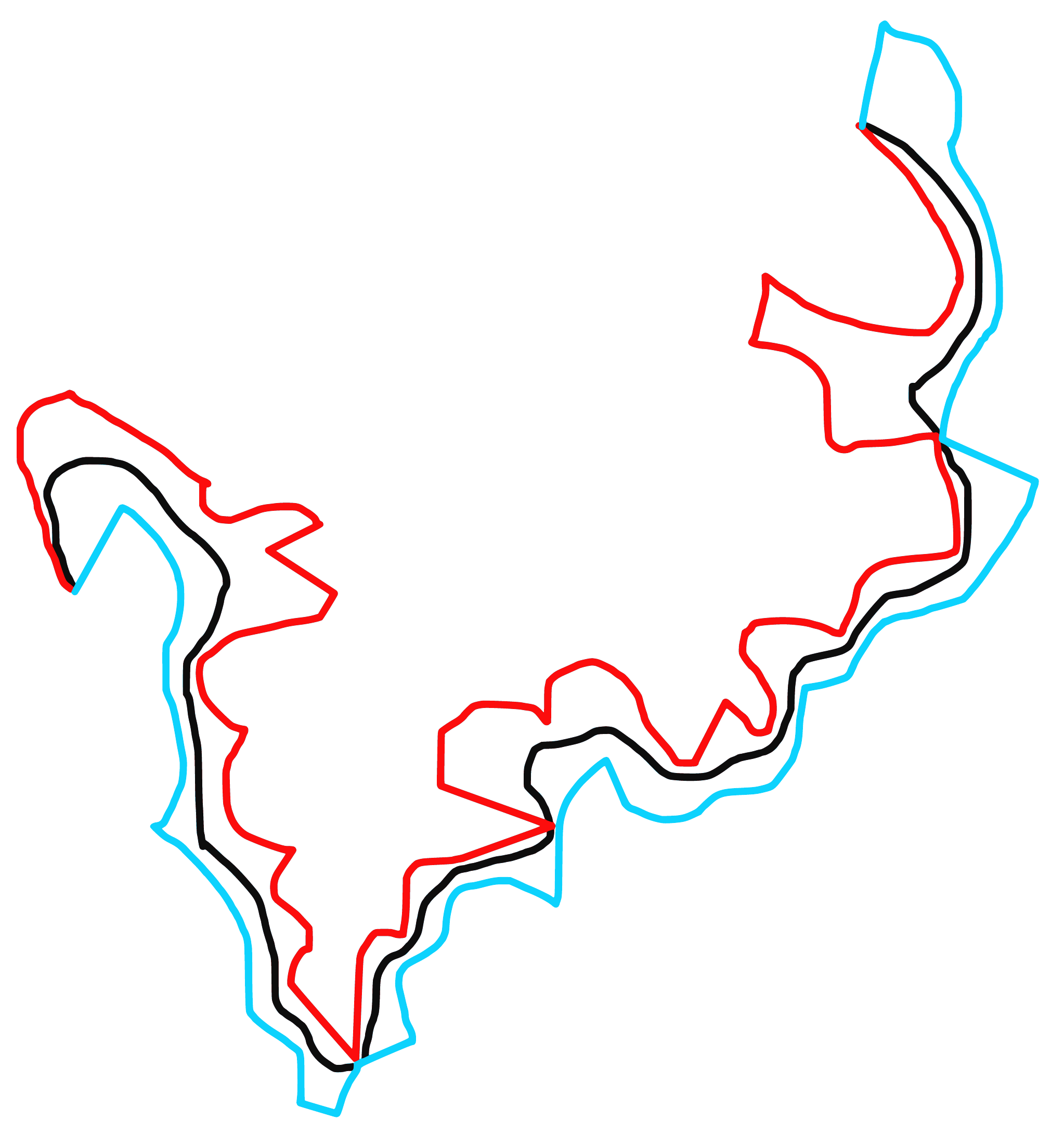}}
  \parbox[t]{0.29\textwidth}{\vspace{13mm}\includegraphics[width=\hsize,angle=-90]{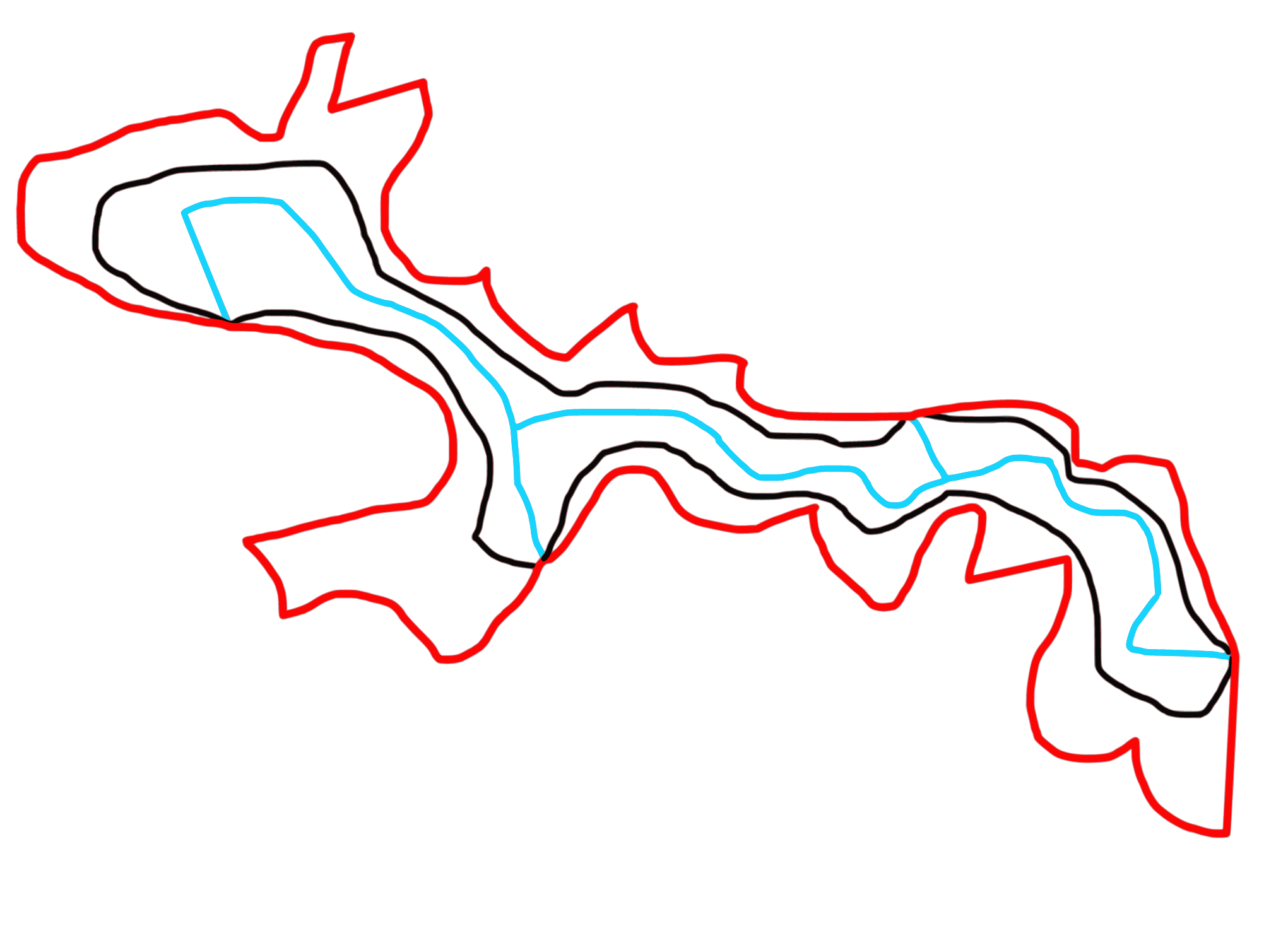}}\\
\begin{picture}(0,0)
\put(110,240){\mbox{$T$:}}
\put(10,220){\mbox{$N_1$: sea horse}}
\put(220,220){\mbox{$N_2$: weasel}}
\put(70,180){\mbox{\scalebox{1.8}{$\longleftarrow$}}}
\put(180,180){\mbox{\scalebox{1.8}{$\longrightarrow$}}}
\put(50,110){\mbox{\rotatebox{135}{\scalebox{1.8}{$\Longleftrightarrow$}}}}
\put(220,110){\mbox{\rotatebox{45}{\scalebox{1.8}{$\Longleftrightarrow$}}}}
\end{picture}
}
\caption{sea horse $\Leftrightarrow$ weasel}\label{fig:seahorse}
\end{figure}

\begin{figure}[h]
\noindent
\mbox{
\hspace{-0.8cm}
\parbox[t]{0.5\hsize}{\vspace{7mm}
\includegraphics[width=\hsize,angle=15]{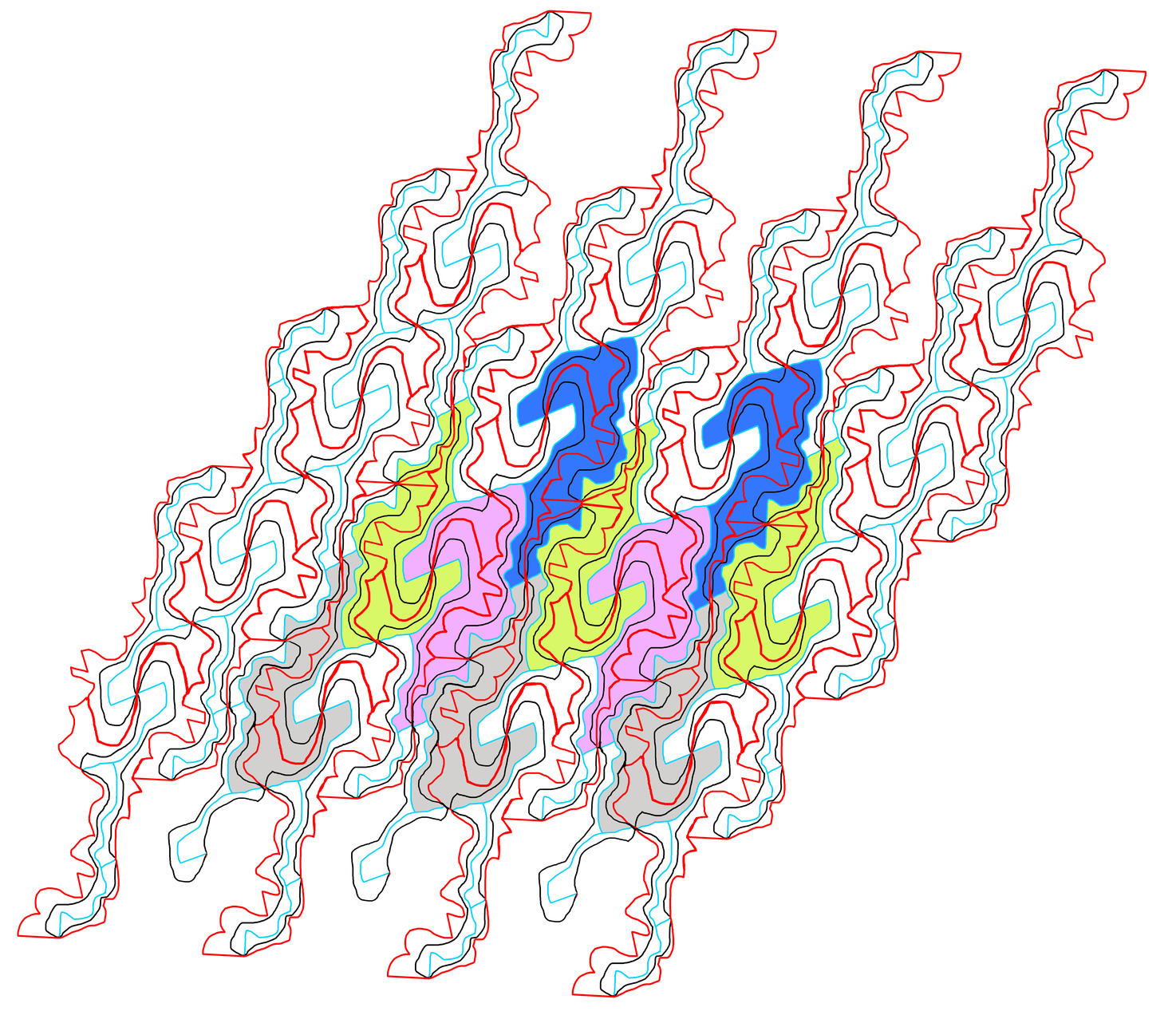}}
\parbox[t]{0.5\hsize}{\vspace{0mm}
{\includegraphics[width=\hsize,angle=15]{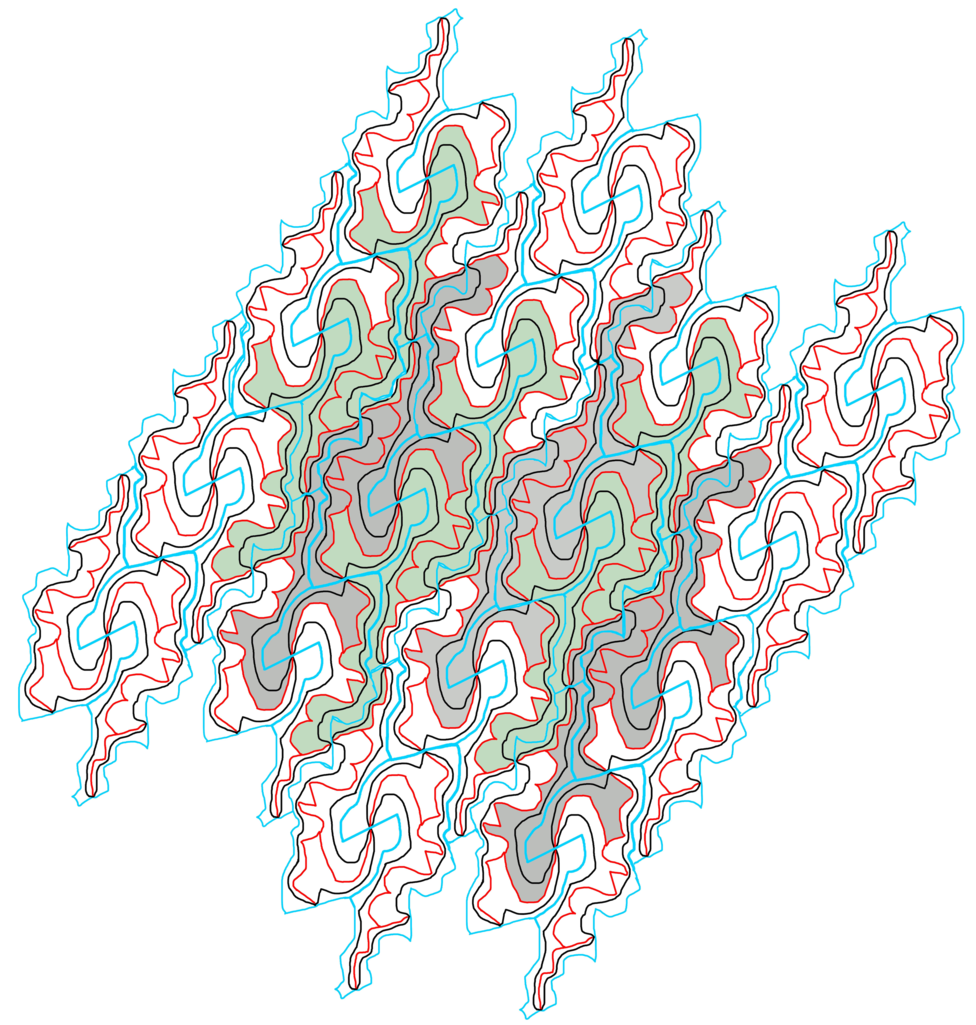}}}
}
\vspace{-1.5cm}
\caption{Tiling by sea horse and weasel }\label{fig:tiling}
\end{figure}

\begin{figure}[h]
\parbox[t]{\hsize}{
  \noindent
  \parbox[t]{0.39\textwidth}{\centering\vspace{15mm}\includegraphics[width=0.8\hsize]{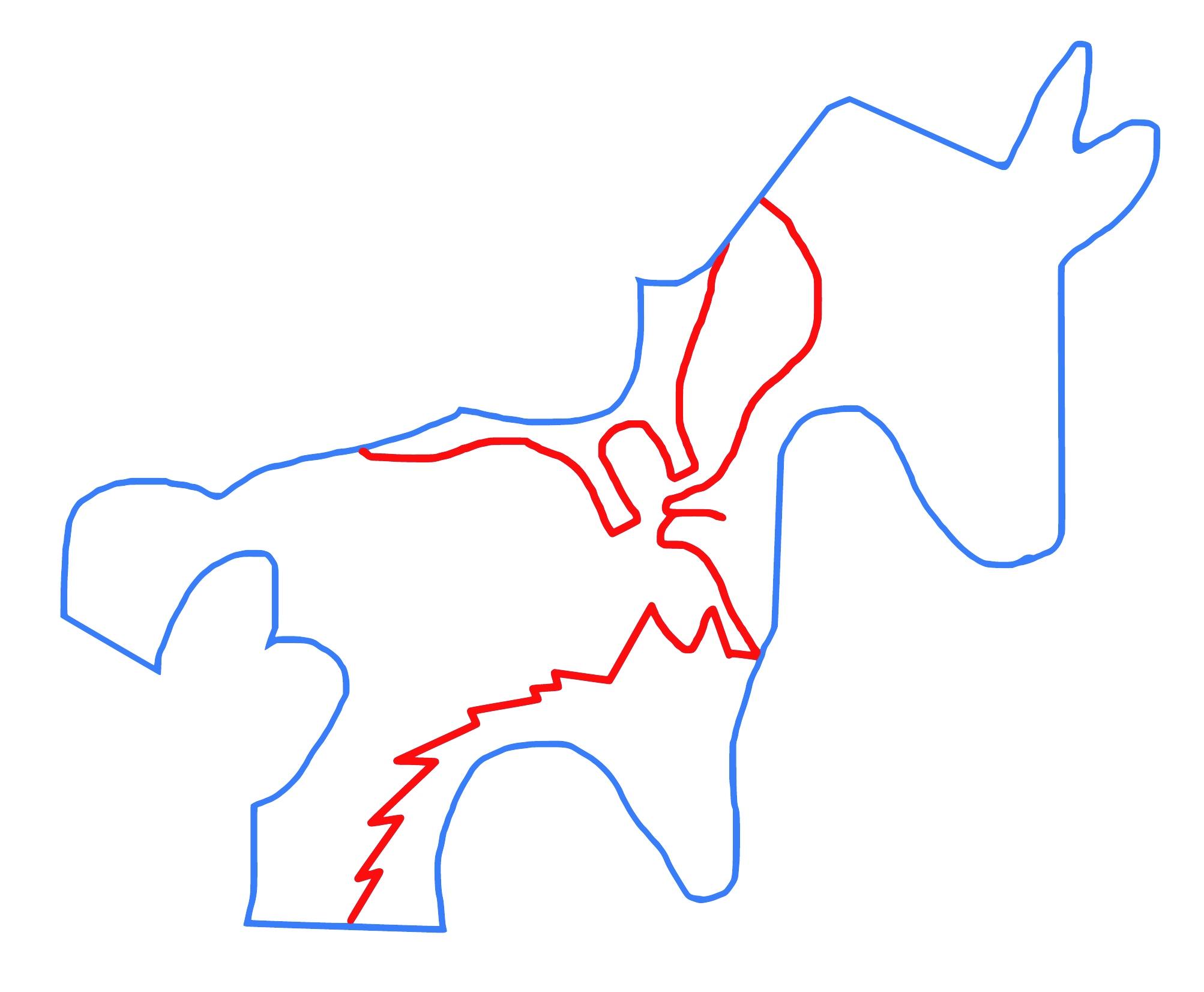}}
  \parbox[t]{0.20\textwidth}{\centering\vspace{0mm}\includegraphics[width=\hsize]{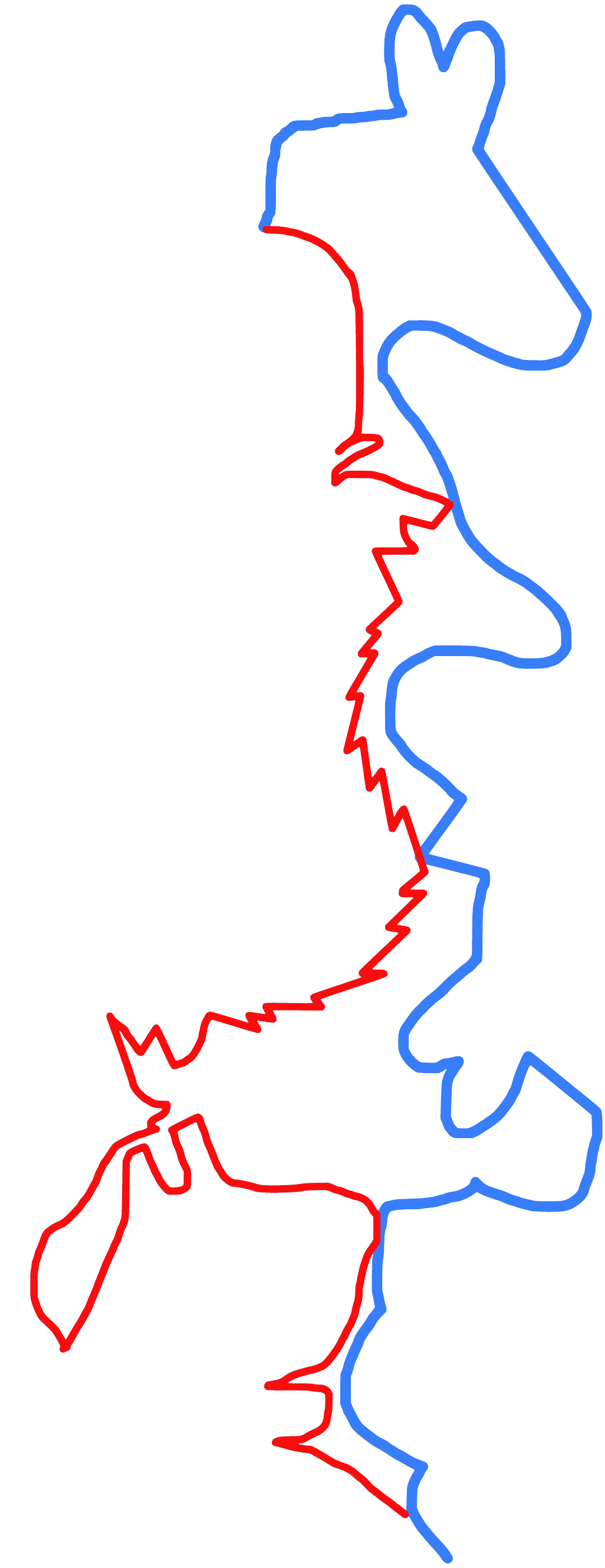}}
  \parbox[t]{0.39\textwidth}{\centering\vspace{20mm}\includegraphics[width=0.8\hsize]{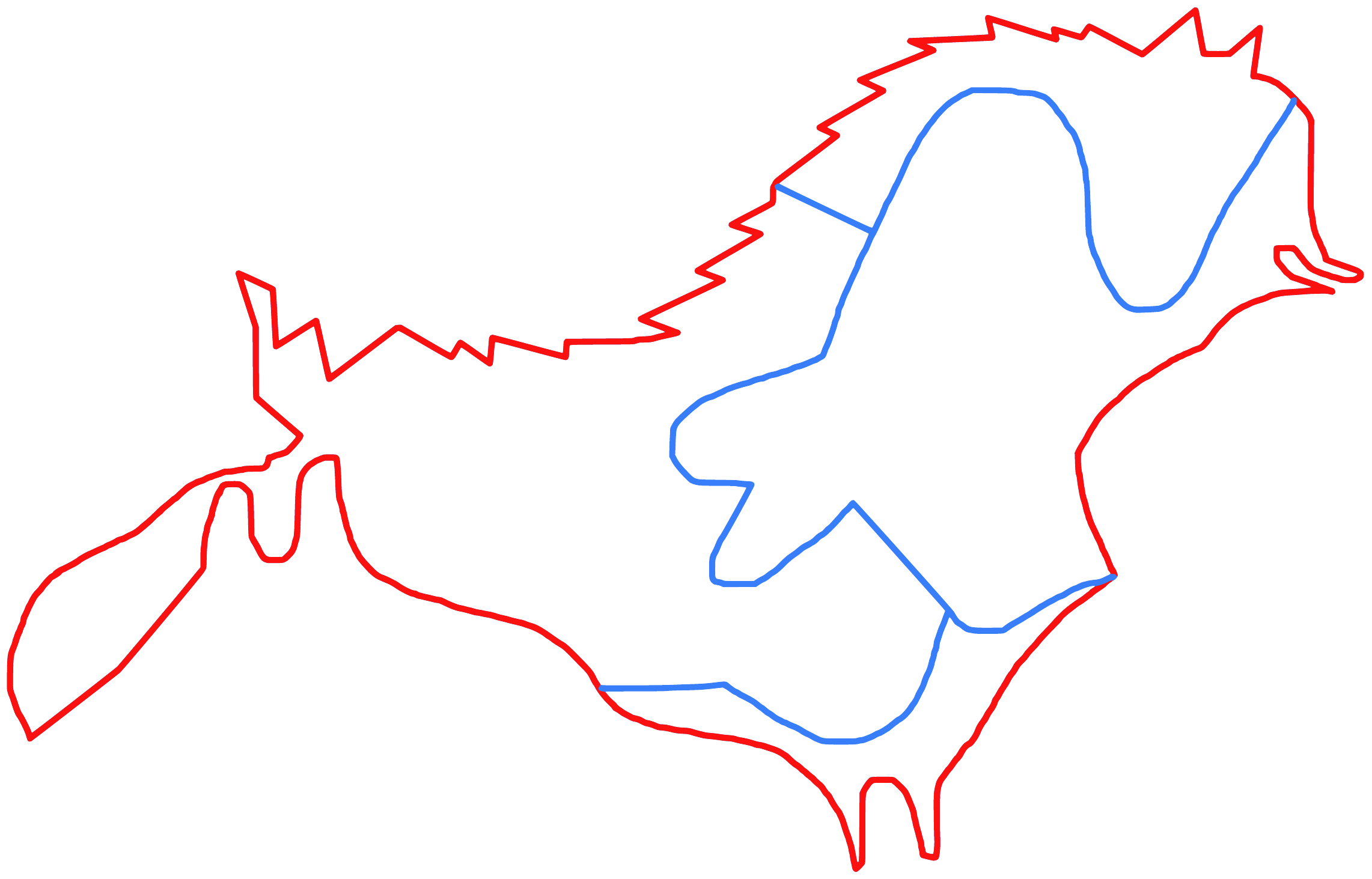}}\\
\begin{picture}(0,0)
\put(120,90){\mbox{\rotatebox{0}{\scalebox{1.8}{$\Longleftrightarrow$}}}}
\put(205,90){\mbox{\rotatebox{0}{\scalebox{1.8}{$\Longleftrightarrow$}}}}
\end{picture}
\vspace{-0,5cm}
  \parbox[t]{0.49\textwidth}{\vspace{0mm}\includegraphics[width=\hsize]{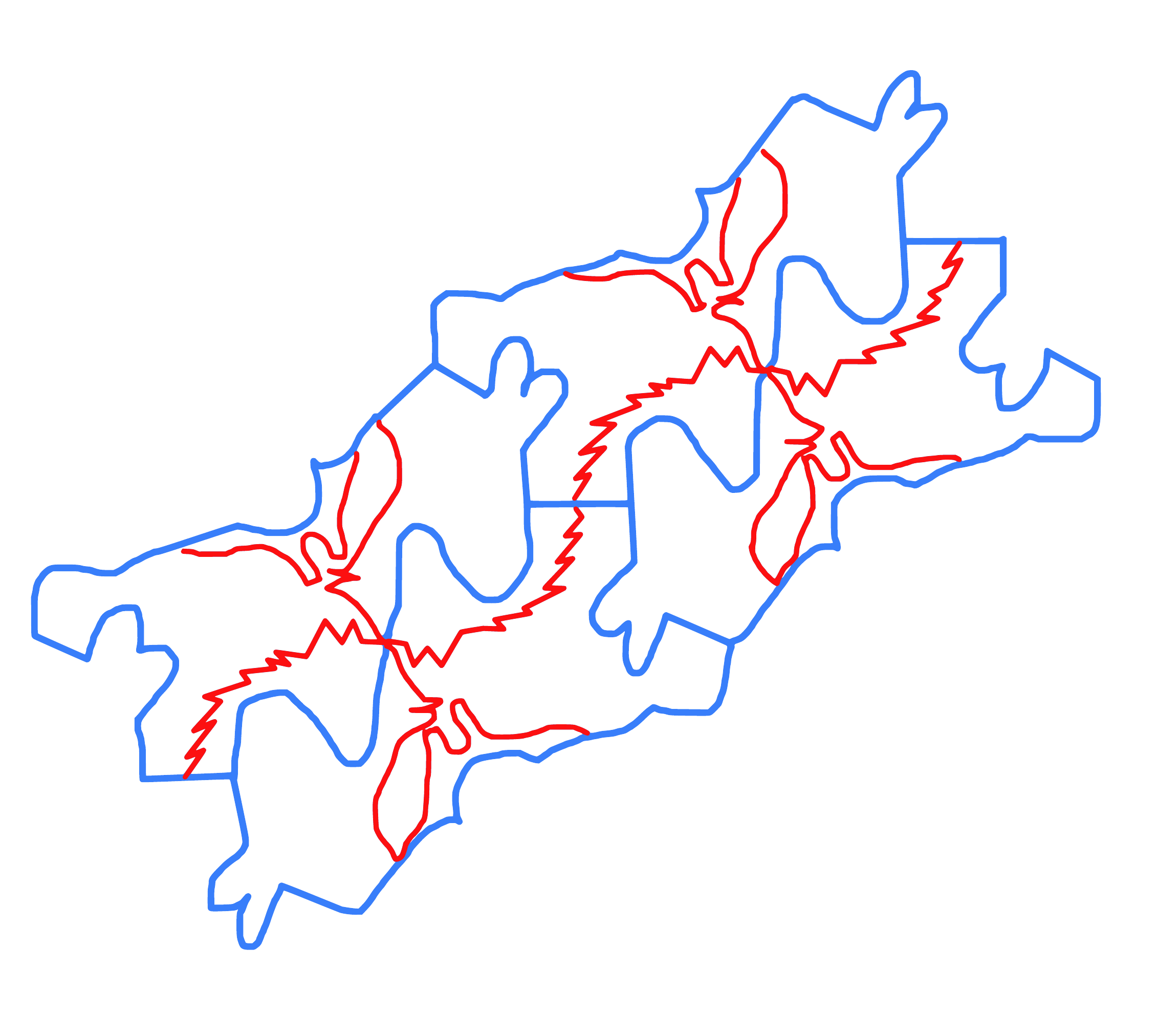}}
  \parbox[t]{0.49\textwidth}{\vspace{0mm}\includegraphics[width=\hsize]{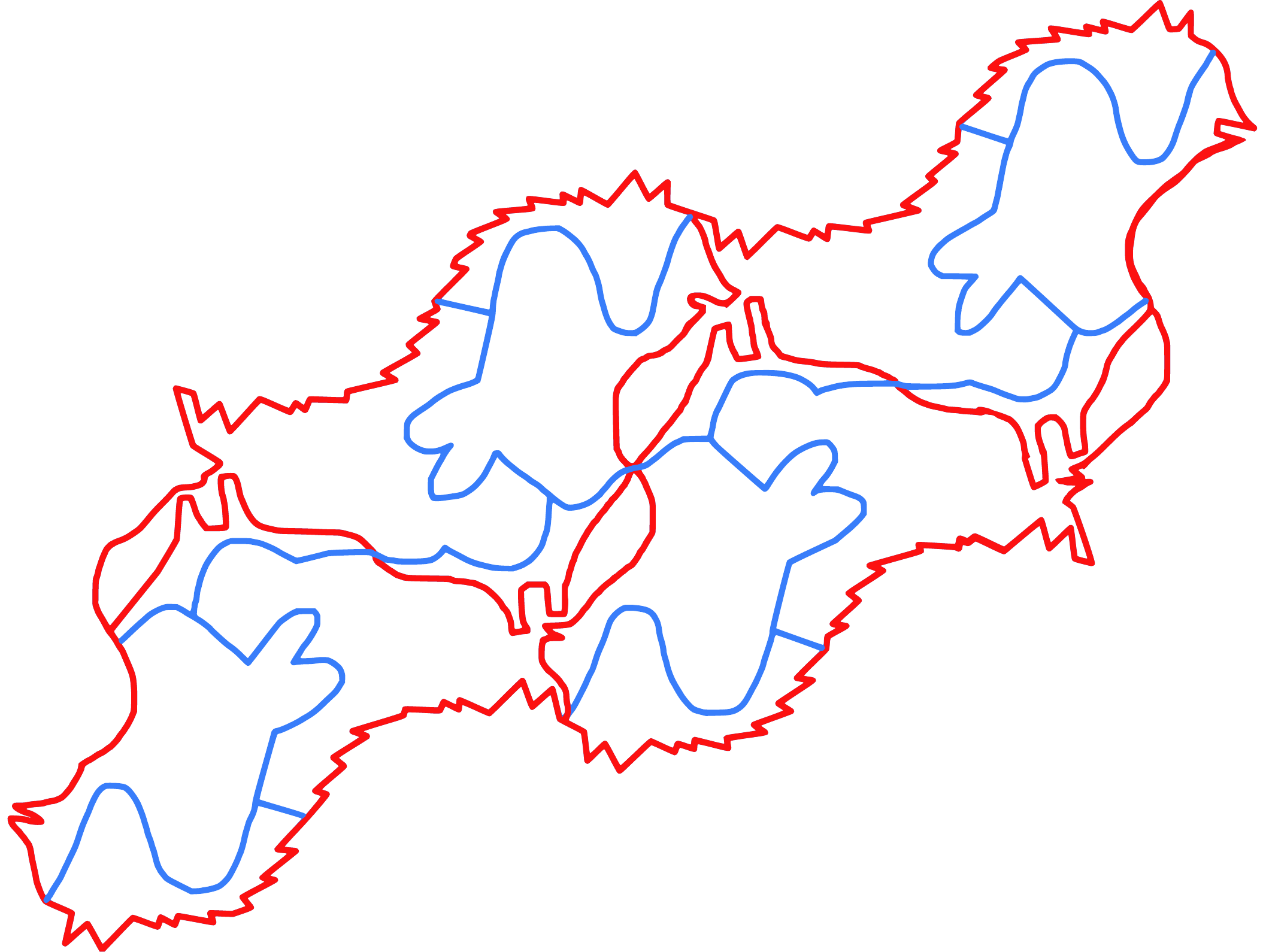}}\\
}
\caption{donkey $\Leftrightarrow$ fox}\label{fig:donkey}
\end{figure}

\begin{theorem}  
Let $D_1$ be an arbitrary dissection tree of an isotetrahedron
$T$. Then there exists a dissection tree $D_2$ of $T$ which doesn't
intersect $D_1$. 
The pair of nets $N_1$
and $N_2$ obtained by cutting along $D_1$ and $D_2$ is reversible, and
each $N_i$ ($i=1,2$) tiles the plane.
\end{theorem}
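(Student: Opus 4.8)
The plan is to combine the two preceding theorems with the known fact that every net of an isotetrahedron tiles the plane. First I would establish existence of a non-crossing $D_2$: by Theorem~\ref{thm:manynets}, for the given net $N_1$ (with its associated dissection tree $D_1$ on the surface of $T$), there are infinitely many nets $N_2$ of $T$, each coming from a spanning tree $D_2$ drawn \emph{inside} $N_1$ connecting the chosen representatives of the duplicated vertices; by construction such a $D_2$ does not properly cross $D_1$. So the required $D_2$ exists — in fact infinitely many of them. Alternatively, one can invoke the separating-cycle argument directly: since $D_1$ spans all four vertices and $T$ has genus $0$, the complement of a small neighborhood of $D_1$ on the surface is an annular region, and any spanning tree of the vertices drawn in that region is automatically non-crossing with $D_1$.

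Next I would apply Theorem~\ref{thm:reversible2} to the pair $(D_1,D_2)$: because $D_1$ and $D_2$ do not properly cross, there is a separating cycle $C$ on the surface of $T$, the net $N_1$ contains an inscribed trunk $T_0$ bounded by $C$ and a conjugate trunk $T_0'$, while $N_2$ has trunk $T_0'$ and conjugate trunk $T_0$; hence $N_1$ and $N_2$ are reversible with a double chain of $n=4$ pieces. This is the reversibility half of the statement and follows with no extra work.

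Finally, for the tessellation half I would invoke the result of~\cite{r2} that \emph{every} net of an isotetrahedron tiles the plane. Since $N_1$ and $N_2$ are, by definition, nets of the isotetrahedron $T$ (each being obtained by cutting the surface of $T$ along a spanning dissection tree), each $N_i$ tiles the plane. This disposes of the last clause immediately.

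I do not expect a serious obstacle here, since the theorem is essentially a corollary of the machinery already developed. The one point deserving care is the first step: one must make sure that a spanning tree of the four vertices really can be routed inside $N_1$ (equivalently, inside the annular complement of $D_1$ on the surface) without crossing $D_1$, and that after cutting $N_1$ along such a $D_2$ and re-hinging one indeed recovers a genuine net $N_2$ of $T$ rather than some other figure — but this is exactly what the proof of Theorem~\ref{thm:manynets} already guarantees, so it suffices to cite it. A secondary subtlety, also already handled in the earlier proofs, is the possibility that $D_1$ and $D_2$ \emph{coincide} along some arcs rather than properly crossing; this is harmless for the separating-cycle construction, so no special treatment is needed.
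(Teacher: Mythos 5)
Your proposal is correct and follows essentially the same route as the paper: cite Theorem~\ref{thm:manynets} for the existence of a non-crossing $D_2$, apply Theorem~\ref{thm:reversible2} (the separating-cycle/trunk argument with a four-piece double chain) for reversibility, and invoke the fact from~\cite{r2} that every net of an isotetrahedron tiles the plane. (Only your side remark is slightly off: the complement of a neighborhood of the tree $D_1$ on the genus-$0$ surface is a disk, not an annulus, but this does not affect the main argument.)
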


\begin{proof}  
By Theorem~\ref{thm:manynets}, there exists a $D_2$ for any $D_1$. Let
four vertices of $T$ be $v_k$ ($k=1,2,3,4$). 
Draw both $D_1$ and $D_2$ on two $T$s. Cut $T$ along $D_1$, and the
net $N_1$  inscribing $D_2$ is obtained. 
On the other hand, cut $T$ along $D_2$, and the net $N_2$ inscribing
$D_1$ is obtained (Fig.~\ref{fig:seahorse}).
As in Theorem~\ref{thm:reversible2}, dissect $N_1$ along $D_2$ (or
dissect $N_2$ along $D_1$) into four pieces $P_1$, $P_2$, $P_3$ and
$P_4$, and join then in sequence by three hinges on the perimeter of
$N_1$ like a chain. Fix one of the end pieces of the chain and rotate
the remaining pieces, then they form the net $N_2$ which is obtained
by cutting $T$ along $D_2$. Since each of $N_1$ and $N_2$ is a net of
an isotetrahedron, then both $N_1$ and $N_2$ are tessellative figures
(Fig.~\ref{fig:tiling}).  \qed
\end{proof}

\bibliographystyle{plain}
\bibliography{dissect}

\end{document}